\newtheorem{definition}{Definition}
\newtheorem{proposition}{Proposition}
\tikzstyle{decision} = [diamond, draw, fill=blue!20, 
\tikzstyle{block} = [rectangle, draw, fill=blue!20, 
\tikzstyle{line} = [draw, -latex']
\tikzstyle{cloud} = [draw, circle, node distance=2.25cm,
\newcommand{\beginsupplement}{%
        \setcounter{table}{0}
        \renewcommand{\thetable}{S\arabic{table}}%
        \setcounter{figure}{0}
        \renewcommand{\thefigure}{S\arabic{figure}}%
 \setcounter{algorithm}{0}
        \renewcommand{\thealgorithm}{S\arabic{algorithm}}%
 \setcounter{equation}{0}
        \renewcommand{\theequation}{S\arabic{equation}}%
 \setcounter{section}{0}
        \renewcommand{\thesection}{S\arabic{section}}%
     }
\title{Fish should not be in isolation: Calculating maximum sustainable yield using an ensemble model}
\author[1*]{Michael A. Spence}
\author[1]{Khatija Alliji}
\author[1]{Hayley J. Bannister}
\author[1]{Nicola D. Walker}
\author[1]{Angela Muench}
\affil[1]{Centre for Environment, Fisheries and Aquaculture Science, Pakefield Road, Lowestoft, Suffolk NR33 0HT, UK}
\affil[*]{michael.spence@cefas.co.uk}
\date{}
\begin{document}
\maketitle
\textbf{Running title:} Fish should not be in isolation

\newpage

\abstract{
Many jurisdictions have a legal requirement to manage fish stocks to maximum sustainable yield (MSY). Generally, MSY is calculated on a single-species basis, however in reality, the yield of one species depends, not only on its own fishing level, but that of other species.
We show that bold assumptions about the effect of interacting species on MSY are made when managing on a single-species basis, often leading to inconsistent and conflicting advice, demonstrating the requirement of a multispecies MSY (MMSY). 
Although there are several definitions of MMSY, there is no consensus. Furthermore, calculating a MMSY can be difficult as there are many models, of varying complexity, each with their own strengths and weaknesses, and the value if MMSY can be sensitive to the model used.
Here, we use an ensemble model to combine different multispecies models, exploiting their individual strengths and quantifying their uncertainties and discrepancies, to calculate a more robust MMSY. We demonstrate this by calculating a MMSY for nine species in the North Sea. We found that it would be impossible to fish at single-species MSY and that MMSY led to higher yields and revenues than current levels.
}

\textbf{Keywords: } Ensemble modelling; maximum sustainable yield, uncertainty analysis, multispecies modelling; Bayesian statistics; Nash equilibrium; emulators; Ecosystem based fisheries management

\section{Introduction}

The human population is growing, which has increased the demand for food production and security, which has led to an incompatibility of food production and conservation priorities. Both of which require a balance, to sustainably support an increasing population \citep{Hilborn_07}. Marine fish are a valuable source of food and income for many countries, however the global yield has levelled off and begun to decline, since the 1990s \citep{FA02009,Worm_09}. There is now an urgent need to manage fish stocks sustainably so that the balance between food production, conservation and the socio-economics are considered. This will ideally lead to an increase in food production, whilst protecting fish stocks and jobs for future generations \citep{Mensil_12}.

Fisheries managers use maximum sustainable yield (MSY), which is intended to ensure the sustainability of fish stocks whist maximising food production, without compromising the reproductive potential of the stock \citep{Hilborn_07, Mensil_12}. The legal requirement to manage fish stocks to MSY, was adopted in 1982 at the United Nations Convention on the Law of the Sea, the EU Regulation 1380/2013 and the 2002 UN world summit on sustainable development (A/CONF.199/20). The concept of MSY has been adopted by many fisheries management organisations throughout the 1900s, where mathematical and production models were used \citep{Tsikliras_Froese}. 
Recently, MSY has become a widely used reference point in the assessment of fish stocks around the world \citep{Hilborn1992,Pauly_Froese_14,Tsikliras_Froese}, here defined as:

\begin{definition}
The fishing mortality that leads to the maximum sustainable yield of the $i$th stock is
\begin{linenomath}
\begin{equation*}
F_{MSY,i}(\bm{F}_{-i})={\text{arg}\sup}_{F_i}\left(f_{1,i}(F_i,\bm{F_{-i}})\right),
\end{equation*}
\end{linenomath}
where $F_i$ is the fishing mortality of the $i$th species, $\bm{F}_{-i}$ is the fishing mortality of the other species and $f_{1,i}(F_i,\bm{F_{-i}})$ is the $i$th species' long-term annual yield (see supplementary material).
\end{definition}
MSY has historically been centred on single species MSY \citep[SS-MSY,][]{Hart_fay}, here defined as: 
\begin{definition}
The single species fishing mortality that leads to the maximum sustainable yield of the $i$th stock, the single-species MSY (SS-MSY), is
\begin{linenomath}
\begin{equation*}
F_{MSY,i}(\bm{F}_{-i})=F_{MSY,i},
\end{equation*}
\end{linenomath}
$\forall{}\bm{F}_{-i}$.
\label{ass:ssmsy}
\end{definition}

Although MSY is now a widely used concept it often applied to single stocks and when used in this way does not provide information on ecological interactions, resulting in significant ecosystem and fishing ramifications \citep{Andersen_15,saterberg_19}. The use of MSY in fisheries management has been criticised for leading to significant changes in community structure, degradation of marine ecosystems and over-exploitation of fisheries resources \citep{Larkin_77,Hilborn_07,Andersen_15}. This has rendered MSY policy guidance as incomplete in terms of ecosystem sustainability \citep{Gaichas_08}.

\begin{proposition}
SS-MSY exists if and only if
\begin{linenomath}
\begin{equation*}
\frac{\partial{}F_{MSY,i}(\bm{F}_{-i})}{\partial{}F_j}=0,
\end{equation*}
\end{linenomath}
$\forall{}j\neq{}i$.
\label{prop:ssmsy}
\end{proposition}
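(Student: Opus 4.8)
The plan is to treat this as the standard calculus characterisation of a constant function through the vanishing of its partial derivatives, specialised to the map $\bm{F}_{-i}\mapsto F_{MSY,i}(\bm{F}_{-i})$ introduced in the definition of MSY. I would first make explicit the standing regularity I need: that the long-term yield $f_{1,i}$ is smooth enough for $F_{MSY,i}(\bm{F}_{-i})$ to be a well-defined, differentiable function of $\bm{F}_{-i}$ on a connected (say, convex) domain of admissible fishing mortalities. Under these assumptions the proposition reduces to the elementary equivalence ``$g$ is constant if and only if $\nabla g\equiv 0$'' for $g=F_{MSY,i}$.

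For the forward implication I would assume that SS-MSY holds, i.e.\ by Definition~\ref{ass:ssmsy} that $F_{MSY,i}(\bm{F}_{-i})=F_{MSY,i}$ for all $\bm{F}_{-i}$, with the right-hand side a fixed constant. Differentiating both sides with respect to each $F_j$ with $j\neq i$ immediately yields $\partial F_{MSY,i}(\bm{F}_{-i})/\partial F_j=0$, since the derivative of a constant vanishes.

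For the converse I would assume $\partial F_{MSY,i}(\bm{F}_{-i})/\partial F_j=0$ for all $j\neq i$ and all $\bm{F}_{-i}$, and deduce that $F_{MSY,i}$ is constant. The cleanest route is to fix two arbitrary points $\bm{F}_{-i}^{(0)}$ and $\bm{F}_{-i}^{(1)}$ in the domain, join them by a line segment (using convexity), and apply the chain rule together with the fundamental theorem of calculus along that segment: every directional derivative along the segment is a linear combination of the vanishing partials, so the value of $F_{MSY,i}$ cannot change between the endpoints. As the two points were arbitrary, $F_{MSY,i}(\bm{F}_{-i})$ equals a single constant $F_{MSY,i}$, which is precisely the SS-MSY condition of Definition~\ref{ass:ssmsy}.

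The only real obstacle is not the algebra but the justification of the background regularity: one must guarantee that $F_{MSY,i}(\bm{F}_{-i})$ is genuinely a differentiable, single-valued function rather than a set-valued $\arg\sup$, and that its domain is connected. I would handle this by imposing a second-order interior-maximum condition on $f_{1,i}$, so that the implicit function theorem applied to the first-order condition $\partial f_{1,i}/\partial F_i=0$ delivers both single-valuedness and smoothness of $F_{MSY,i}$; connectedness then follows from taking the natural convex space of fishing mortalities as the domain.
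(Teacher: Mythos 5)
Your proposal is correct and rests on the same underlying reduction as the paper's proof: SS-MSY (Definition~\ref{ass:ssmsy}) holds precisely when $\bm{F}_{-i}\mapsto F_{MSY,i}(\bm{F}_{-i})$ is constant, and constancy is equivalent to vanishing partial derivatives. The difference is in how the nontrivial direction is executed. The paper argues by cases on the sign of $\partial F_{MSY,i}/\partial F_j$ and writes, e.g., $F_{MSY,i}(\bm{F}_{-i,j},F_j) < \lim_{\delta\to 0}F_{MSY,i}(\bm{F}_{-i,j},F_j+\delta)$ when the partial is positive; but since the paper has just invoked the maximum theorem to establish continuity of $F_{MSY,i}$, that limit equals the left-hand side and the strict inequality cannot hold as written (one needs a fixed finite $\delta$, or an integral of the derivative, to see a genuine change in value). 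Your route --- joining two arbitrary points of a convex domain by a segment and applying the chain rule and the fundamental theorem of calculus to conclude constancy from vanishing partials --- is the standard rigorous repair of exactly this step. You also make explicit the regularity the argument silently needs (that the $\arg\sup$ is single-valued and differentiable, obtainable from a second-order condition and the implicit function theorem, plus connectedness of the domain), which the paper addresses only by citing the maximum theorem for continuity. So: same skeleton, but your converse is the more careful of the two.
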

\begin{proof}
See supplementary material.
\end{proof}
Proposition \ref{prop:ssmsy} suggests that the fishing mortality on other species does not affect the value of $F_{MSY,i}$, which does not seem plausible in reality and therefore Definition \ref{ass:ssmsy} is rarely met. To counteract this, there has been a push to move to ecosystem-based fisheries management \citep[EBFM,][]{Pikitch_04,link_2011}.

Although there is no universally agreed definition of multispecies MSY \citep[MMSY,][]{essington_punt,norrstrom2017nash}, there are a number of proposed alternatives such as the maximum sustainable yield of the community \citep{Andersen_book}. When satisfied, the maximum sustainable yield of the community leads to over-exploitation of species with larger body sizes, leading to a decrease of predating pressure on fish stocks with smaller body sizes \citep{Andersen_book,Andersen_15,Szuwalski_16}. Fishing in this way leads to the collapse of stocks of larger species, reducing the diversity of the ecosystem and decreasing the monetary value of the fishery \citep{Andersen_book,Andersen_15}. \citet{Thorpe_19} and \citet{saterberg_19} extended this definition to include the risk of species collapse as a caveat to the maximum sustainable community yield. Alternatively the Nash equilibrium has been used to define MMSY \citep{norrstrom2017nash,thorpe17,Farcas_Rossberg}. 
The Nash equilibrium is a solution to multi-player games where no player can improve their payoff given fixed strategies played by the opponents \citep{nash}. This was applied, with the interest of maximising the yield of each fish stock, whilst taking account of the ecological impacts on other species \citep{norrstrom2017nash}.

Multispecies reference points have been predicted by production models \citep{sissenwine_shepherd_87}, however these models ignore the effects of ecological interactions between species \citep{norrstrom2017nash}.
Mechanistic multispecies models, henceforth known as simulators, are being increasingly used to support policy decisions, including fisheries and marine environmental polices \citep{hyder,Nielsen_et_al_18}, and are able to capture these interactions. They describe how multiple species interact with their environment and one another, through mechanistic processes allowing them to better predict into the future \citep{hollowed_etal}.
However, calculating reference points can be sensitive to the choice of simulator that was used to generate them \citep{Collie_16,Essington_13,fulton_03,Hart_fay}. This choice can be arbitrary as, although some simulators are better at describing some aspects of the system than others, in general no simulator is uniformly better than the others \citep{chandler}. 
Instead of choosing one simulator, it is possible to combine them using an ensemble model, allowing managers to maximise the predicting power of the simulators, whilst reducing the uncertainty and errors, improving their decision making. 
\citet{spence_ff} developed an ensemble model that treats the individual models as exchangeable and coming from a distribution. Their model exploits each simulators strengths, whilst discounting their weaknesses to give a combined solution. 

In this paper, we demonstrate how multiple simulators can be combined to find a MMSY. 
We demonstrate it by finding a Nash equilibrium for nine species in the North Sea using the ensemble model developed in \citet{spence_ff}. Although demonstrated with this definition of MMSY in the North Sea, the procedure can be used to optimise any objective function in any environment, including single-species reference points.


\section{Methods}
\label{sec:method}
We modelled nine species (see Table \ref{tb:species}) in the North Sea using historical fishing mortality from 1984 until 2017 \citep{ices_hawg,ices_ns} and fixed fishing mortality, $\bm{F}=(F_1,\ldots{},F_9)'$, from 2017 to 2050, with $F_i\in[0,2]$ for $i=1,\ldots{},9$.
Our aim was to find $\bm{F}$ values that satisfy the Nash equilibrium \citep{nash}, 
with a probability that the spawning stock biomass (SSB) falls below $B_{lim}$, the level of SSB at which recruitment becomes impaired, of 0.25 or less (see Section \ref{sec:discussion}).
\begin{table}[ht]
\caption{A summary of the species in the model. The SS-MSY values were taken from \citet{ices_hawg} and \citet{ices_ns}.}
\label{tb:species}
\begin{center}
\begin{tabular}{lllll}
\hline
$i$ & Species & Latin name & SS-FMSY & Price per tonne (\pounds{})\\
\hline
1 & Sandeel & \emph{Ammodytes marinus} & NA & 1314.59\\
2 & Norway pout & \emph{Trisopterus esmarkii} & NA & 151.96\\
3 & Herring & \emph{Clupea harengus} & 0.33 & 528.34\\
4 & Whiting & \emph{Merlangius merlangus} & 0.15 & 785.30\\
5 & Sole & \emph{Solea solea} & 0.20 & 8387.12\\
6 & Plaice & \emph{Pleuronectes platessa} & 0.21 & 1718.21\\
7 & Haddock & \emph{Melanogrammus aeglefinus} & 0.19 & 1346.99\\
8 & Cod & \emph{Gadus morhua} & 0.31 & 1745.22\\
9 & Saithe & \emph{Pollachius virens} & 0.36 & 855.33
\end{tabular}
\end{center}
\end{table}
We defined our reference point as:
\begin{definition}
$\bm{F}_{Nash}$, is when
\begin{linenomath}
\begin{equation*}
\forall{}i,F_i:f_{1,i}(F_{Nash,i},\bm{F}_{Nash,-i})\geq{}f_{1,i}(F_{i},\bm{F}_{Nash,-i})
\end{equation*}
\end{linenomath}
and $Pr(B_{i}(\bm{F}_{Nash})<B_{lim,i}))<0.25$, where $B_{i}(\bm{F})$ is the long-term SSB of the $i$th species under future fishing mortality $\bm{F}$.
\label{def:nash}
\end{definition}

We used the ensemble model yield and SSB in 2050 to be the long-term yield, $f_{1,i}(\bm{F})$, and long-term SSB, $B_{i}(\bm{F})$, for $i=1,\ldots{},9$, respectively.
As running simulators and the ensemble model is computationally expensive, we used a Gaussian process emulator \citep{kennedy_ohagan} to describe $f_{1,i}(\bm{F})$ and the 25th percentile of the long-term SSB of the $i$th species under future fishing mortality $\bm{F}$, $f_{2,i}(\bm{F})$ (i.e. $Pr(B_{i}(\bm{F}) < f_{2,i}(\bm{F}))=0.25$), which we iteratively updated after rounds of simulations, allowing us to efficiently search for $\bm{F}_{Nash}$ values.
A round consisted of running each simulator and the ensemble model for $\bm{F}$ values to find the yield and SSB.
We ran four rounds to find the $\bm{F}$ values that satisfied Definition \ref{def:nash}. 
The first round of 196 $\bm{F}$ were chosen using Sobol' sequences, a space filling algorithm \citep{sobol}.
For the subsequent rounds we proposed 100 $\bm{F}$ values that we belied may be $\bm{F}_{Nash}$ values according to the Gaussian process emulator.

We found $\bm{F}_{Nash}$ values by the following steps:
\begin{enumerate}
    \item Generate $\bm{F}^{(l)}$, for $l=1,\ldots{}196$, using Sobol' sequences.
    \item Evaluate the simulators and the ensemble model at each of the new scenarios to find the yield and the SSB.
    \label{list:nr1}
    \item Emulate the predictions of the long-term yield, $f_{1,1:9}(\bm{F})$, and the 25th percentile of the long-term SSB from the ensemble model,  $f_{2,1:9}(\bm{F})$.
    \label{list:nr3}
    \item Find 100 potential Nash equilibria, $\bm{F}^{(l)}$ (for $l=197,\ldots{},296$ in the second round, $l=297,\ldots{},396$ in the third round and $l=397,\ldots{},496$ in the fourth round).
    \label{list:nr2}
    \item Repeat step \ref{list:nr1} to \ref{list:nr2} twice.
    \item Evaluate the simulators and the ensemble model the scenarios $\bm{F}^{(l)}$, for $l=397,\ldots{},496$, to find the yield and the SSB.
    \label{list:nr6}
\end{enumerate}

After the fourth round (step \ref{list:nr6}), the final $\bm{F}_{Nash}$ values were all of $\bm{F}^{(l)}$ scenarios that satisfy Definition \ref{def:nash}, for $l=397,\ldots{},496$.
Due to the uncertainty in the ensemble model, we had several final $\bm{F}_{Nash}$ values. To try and distinguish between these we calculated the expected revenue for each of them.

The rest of the methods are as follows: the simulators, the ensemble model and the Gaussian process emulator are described in Sections \ref{sec:sim}, \ref{sec:en_mod} and \ref{sec:gpe} respectively; an algorithm of how we find potential $\bm{F}_{Nash}$ values is described in Section \ref{sec:nash} and we conclude by describing how we calculated the revenue of the long-term yield in Section \ref{sec:revenue}.


\subsection{Simulators}
\label{sec:sim}
Four multispecies simulators were used: EcoPath with EcoSim (EwE \citet{mackinson}), LeMans \citep{thorpe15}, mizer \citep{blanchard} and FishSUMs \citep{fishsums}. All of them the simulators were able to describe the dynamics of
all nine species with the exception of FishSUMs, which did not model sole. 

To keep the interpretation of fishing mortality the same across simulators, we used the single-species assessments fishing mortality at age to drive the dynamics of the simulators \citep{ices_hawg,ices_ns}.
For the size-based simulators, LeMans, mizer and FishSUMs, we calculated the length at age using their respective von Bertalanffy parameters, however for EwE we used the $\bar{F}$ 
values from the assessments. In the future (2018-2050), the age selectivity was the same as those in 2017 and species that appear in the models but not in the study were fished at their 2017 levels.

\subsection{Ensemble model}
\label{sec:en_mod}
The predicted yields and SSB from the four multispecies simulators were combined using the ensemble model of \citet{spence_ff}. 
The ensemble model is described in Table \ref{tb:ensemble_mod} and the simulator specific values are described in Table \ref{tb:simulators}.
We fit two ensemble models, one for the yields ($j=1$) and one for the SSB ($j=2$). For the yields, the simulators and the observations are in natural log tonnes, with the observations, $\hat{\bm{y}}^{(t)}_1$,
coming from \citet{ices_landings}.
In the SSB model ($j=2$), the simulators predicted SSB and the observations were in natural log tonnes, for their respective species and years, with the observations, $\hat{\bm{y}}^{(t)}_2$, coming from stock-assessments \citep{ices_hawg,ices_ns}. 

Due to the high dimensionality and correlation of the uncertain parameter space, we fitted the ensemble model using No U-turn Hamiltonian Monte Carlo \citep{hoffman_gelman} in the package Stan \citep{stan}. We ran the algorithm for 2000 iterations discarding the first 1000 as burn in.

\begin{table}[ht]
\caption{A summary of the variables in the ensemble model. The ensemble model is run for 1984--2050. For values of $n_k$, $M_k$ and $T_k$ see Table \ref{tb:simulators}.}
\label{tb:ensemble_mod}
\begin{center}
\begin{tabular}{cclp{5cm}l}
\hline
Variable&Dimensions&$t$ & Description & Relationship \\
\hline
$\bm{y}^{(t)}_j$&$9$&1984--2050&The truth & $\bm{y}^{(t)}_j\sim{}N(\bm{y}^{(t-1)}_j,{\Lambda}_{y,j})$\\
$\bm{\hat{y}}^{(t)}_j$&$9$&1984--2017&Noisy observation of $\bm{y}^{(t)}_j$&$\hat{\bm{y}}_j^{(t)}\sim{}N({\bm{y}}^{(t)}_j,\Sigma_{y,j})$\\
$\bm\delta_j$&$9$&NA&Long-term shared discrepancy&\\
$\bm\eta^{(t)}_j$&$9$&1984--2050&Short-term shared discrepancy&$\bm{\eta}_j^{(t)}\sim{}N(
    R_{\eta,j}\bm{\eta}_j^{(t-1)},\Lambda_{\eta,j}$)\\
$\bm\mu^{(t)}_j$&$9$&1984--2050&Simulator consensus&$\bm\mu^{(t)}_j=\bm{y}^{(t)}_j + \bm\delta_j + \bm\eta^{(t)}_j$\\
$\bm{\gamma}_{k,j}$&$9$&NA&Simulator $k$'s long-term individual discrepancy&$\bm\gamma_{k,j}\sim{}N(\bm{0},C_{\gamma,j})$\\
$\bm{z}_{k,j}^{(t)}$&$9$&1984--2050&Simulator $k$'s short-term individual discrepancy&$\bm{z}_{k,j}^{(t)}\sim{}N(R_{k,j}\bm{z}_{k,j}^{(t-1)},\Lambda_{k,j})$\\
$\bm{x}_{k,j}^{(t)}$&$9$&1984--2050&Simulator $k$'s best guess&$\bm{x}_{k,j}^{(t)}=\bm{\mu}_j^{(t)} + \bm{\gamma}_{k,j} +\bm{z}_{k,j}^{(t)}$\\
$\bm{\hat{x}}_{k,j}^{(t)}$&$n_k$&$T_k$& The expectation of simulator $k$'s output $\bm{x}_{k,j}^{(t)}$&$\bm{\hat{x}}_{k,j}^{(t)}\sim{}N(M_k\bm{x}_{k,j}^{(t)},\Sigma_{k,j})$
\end{tabular}
\end{center}
\end{table}

\begin{landscape}
\begin{longtable}{p{0.5cm}p{2.5cm}p{2.5cm}p{1.5cm}p{3cm}p{4cm}p{3cm}}
\caption{A summary of the simulators, their outputs used in the case study, the simulator-specific values of $n_k$, $T_k$, $M_k$ and $\Sigma_k$.}
\label{tb:simulators}\\
\hline
$k$ & Simulator & Description & $n_k$ & $T_k$ & $M_k$ & Reference for $\Sigma_k$ \\
\hline
1 & EcoPath with EcoSim (EwE) &An ecosystem model with 60 functional groups for the North Sea &
$n_1 = 9$ &
$T_1=1991-2050$ & 
\begin{equation*}
M_{1}=I_9
\end{equation*}
&
\citet{mackinson}\\
2 & LeMans&
Abundance in length classes is modelled by species&
$n_2 = 9$ &
$T_2=1986-2050$&
\begin{equation*}
M_{2}=I_9.
\end{equation*}
&
\citet{thorpe15}\\
3 & mizer &
Total weight is modelled in weight classes by species &
$n_3 = 9$ &
$T_3=1984-2050$
&
\begin{equation*}
M_{3}=I_9
\end{equation*}
&
\citet{spence_ns}\\
4 & FishSUMs&
Abundance in length classes is modelled by species&
$n_4 = 8$ &
$T_4=1984-2050$.&
\begin{equation*}
 M_4=\begin{pmatrix}
  1 & 0 & 0 & 0 & 0 & 0 & 0 & 0 & 0 \\
  0 & 1 & 0 & 0 & 0 & 0 & 0 & 0 & 0 \\
  0 & 0 & 1 & 0 & 0 & 0 & 0 & 0 & 0 \\
  0 & 0 & 0 & 1 & 0 & 0 & 0 & 0 & 0 \\
  0 & 0 & 0 & 0 & 0 & 1 & 0 & 0 & 0 \\
  0 & 0 & 0 & 0 & 0 & 0 & 1 & 0 & 0 \\
  0 & 0 & 0 & 0 & 0 & 0 & 0 & 1 & 0 \\
  0 & 0 & 0 & 0 & 0 & 0 & 0 & 0 & 1 \\
 \end{pmatrix}
\end{equation*}
&
\citet{spence_ff} 
\end{longtable}
\end{landscape}


\subsection{Gaussian process emulator}
\label{sec:gpe}

The four simulators ran $m$ future fishing scenarios, $\bm{F}^{(l)}$ for $l=1,\ldots{},m$, from 2018 until 2050.
The ensemble model was evaluated at each of these future fishing scenarios to find the long-term yield, $f_{1,i}(\bm{F})$, and long-term SSB. 
To find the Nash equilibrium we were required to evaluate $f_{1,i}(\bm{F})$ and the 25th percentile of the long-term SSB, $f_{2,i}(\bm{F})$, of the $i$th species at all $\bm{F}$ values. However, this was practically infeasible, as the simulators are relatively slow to run due to the computational complexity.



We used a Gaussian process emulator \citep{kennedy_ohagan,noe_19} to estimate $f_{1,i}(\bm{F})$ and $f_{2,i}(\bm{F})$ for all $\bm{F}$ values. If we let $\bm{f}_{j,i}=\left(f_{j,i}(\bm{F}^{(1)}),f_{j,i}(\bm{F}^{(2)}),\ldots,f_{j,i}(\bm{F}^{(m)})\right)'$, for $j=1$ and $2$, then we say that
\begin{linenomath}
\begin{equation*}
    \bm{f}_{j,i}\sim{}GP(\bm{\eta}_{j,i},K_{j,i}),
\end{equation*}
\end{linenomath}
where $\bm\eta_{j,i}$ was a generalised additive model \citep{wood_gam} and $K_{j,i}$ was the Matern covariance function (for more details see supplementary material), fitted using the DiceKriging package \citep{DiceKriging} in R \citep{R}.

\subsection{Finding the Nash equilibrium}
\label{sec:nash}

An algorithm to find the Nash equilibrium is to iteratively update $F_i$ by solving $F_i=F_{MSY,i}(\bm{F}_{-i})$ \citep{thorpe17,norrstrom2017nash}. 
At each iteration, the long-yield term yield of the $i$th species from the ensemble model was maximised by changing $F_i$.
In our case this meant 
maximising $f_{1,i}(F_i,\bm{F}_{-i})$, a stochastic function, such that $f_{2,i}(F_i,\bm{F}_{-i}) > B_{lim,i}$. 
At each iteration, we sampled 500 potential $F_i$ values, using a  Latin hypercube \citep{McKay_79}, and used the emulators to predict the long-term yield of the $i$th species and the 25th percentile of the long-term SSB of all species.
$F_i$ then became the potential new value with the largest estimated long-term yield for the $i$th species such all of the species' 25th percentile of their long-term SSB's were above their respective $B_{lim}$'s. This is summarised in Algorithm \ref{alg:pmda}.

\begin{algorithm}
\caption{A single iteration to find the Nash equilibrium. $LHC_{500}(0,2)$ is 500 samples from a Latin hypercube of 1 dimension.}
\label{alg:pmda}
\begin{algorithmic}
\FOR {$i$ in $1:9$}
\STATE $\bm{F}'\sim{}LHC_{500}(0,2)$
\STATE $\bm{{\tilde{f}}}_{i,1}\sim{}GP(\bm{\eta}_{1,i}(\bm{F}',\bm{F}_{-i}),K_{1,i})$
\STATE $\bm{{\tilde{f}}}_{1:9,2}\sim{}GP(\bm{\eta}_{2,i}(\bm{F}',\bm{F}_{-i}),K_{2,i})$
\STATE $ll\gets{}{\text{arg}\max}_l\left\{{{\tilde{f}}_{i,1,l}}:\tilde{f}_{i',2,l} > B_{lim,i'}\text{ for }i'=1,\ldots{},9\right\}$
\STATE $F_i\gets{}{F}_{ll}'$
\ENDFOR
\end{algorithmic}
\end{algorithm}

To initialise the algorithm, we sampled 10,000 $\bm{F}$ values, using a Latin hypercube 
design and used the emulators to estimate the long-term yield of each species and the 25th percentile of the long-term SSB. The initial $F_i$ value was the proposed fishing mortality for the $i$th species that lead to the highest long-term yield such that the 0.25 percentile of the $i$th species' long-term SSB was above $B_{lim,i}$.

The Nash equilibria was estimated with 100 samples from the posterior distribution of the ensemble model by repeating Algorithm \ref{alg:pmda} between 26 and 100 times, drawn at random. The resulting 100 samples were $\bm{F}_{Nash}$ values, which we ran the simulators and the ensemble model with. 

\subsection{Revenue of the long-term yield}
\label{sec:revenue}
For the $\bm{F}_{Nash}$ values, we calculated the expected revenue from the long-term yields for each Nash equilibrium found using Algorithm \ref{alg:pmda}. To derive the revenue, we predicted the prices for each year until 2050 using a uni-variate Vector Auto-Regressive estimation model (VAR) and landings values per tonne per species (deflated) of the UK fleet in England from 1970-2018. The value of the landings per tonne are shown in Table \ref{tb:species}.


\section{Results}
\label{sec:results}

\subsection{Simulator runs}
Each simulator was run for 496 different fishing scenarios, $\bm{F}^{(l)}$ for $l=1,\ldots{}496$. 
Figure \ref{fig:histYield} shows the historical yields and each of the simulators predicted yields for the period 1985 to 2017. Most of the simulators were able to qualitatively recreate the trends of the observed yields for most of the species, however no single simulator appears to be overall better than the others.

\begin{figure}[ht]
\begin{center}
    \includegraphics{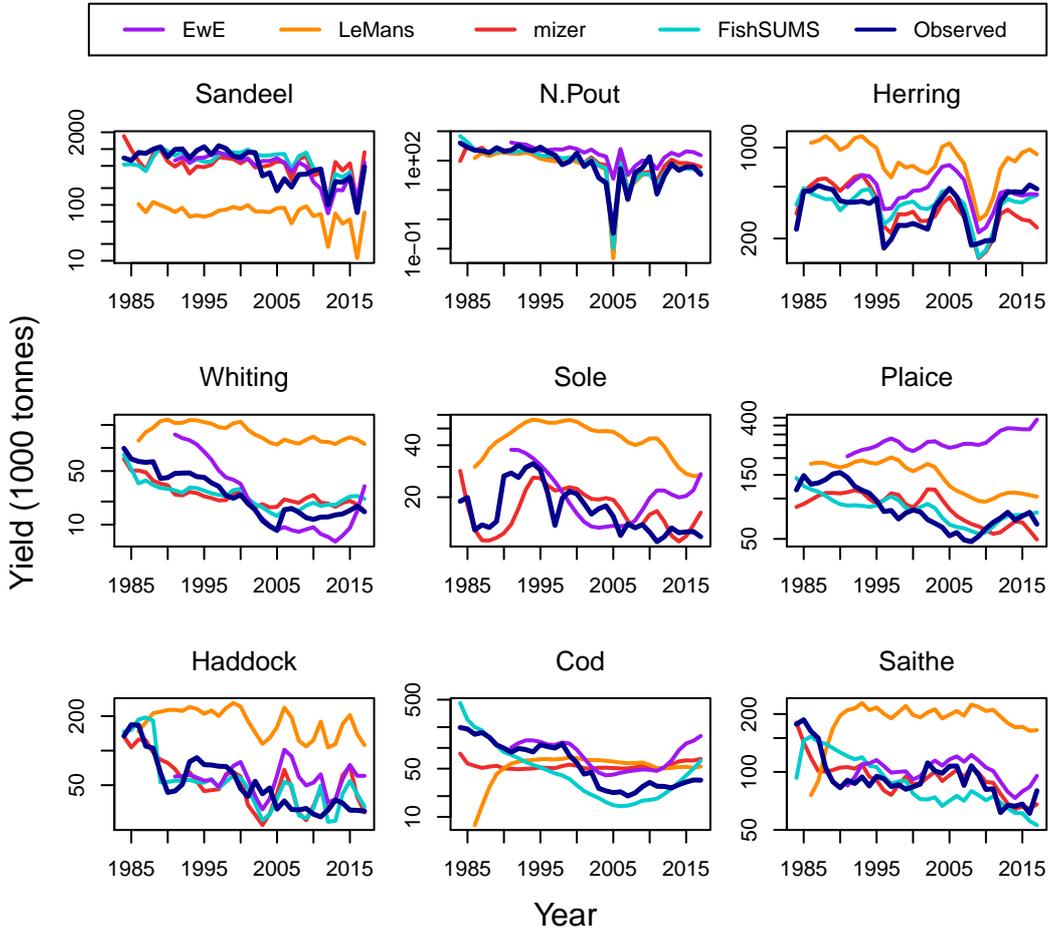}
\end{center}
\caption{Historical yield from observations \citep{ices_landings} and the simulators.}
\label{fig:histYield}
\end{figure}

\subsection{Ensemble outputs}
We fitted the ensemble model and used it to describe, with uncertainty, what the yield and SSB would be under the future fishing scenarios. The median long-term yield for all of the scenarios is shown in Figure \ref{fig:ens_Yield}. %
Although the long-term yield and SSB were sensitive to the fishing mortality of that species, it was also sensitive to the fishing mortality of other species. Figure \ref{fig:codWhiting} shows the 5th and 25th percentile of the long-term SSB for cod and whiting for varying fishing mortality of cod respectively, with the solid lines being their $B_{lim}$ values. The long-term SSB's of whiting and cod appear to be negatively correlated.

\begin{figure}[!]
\begin{center}
    \includegraphics{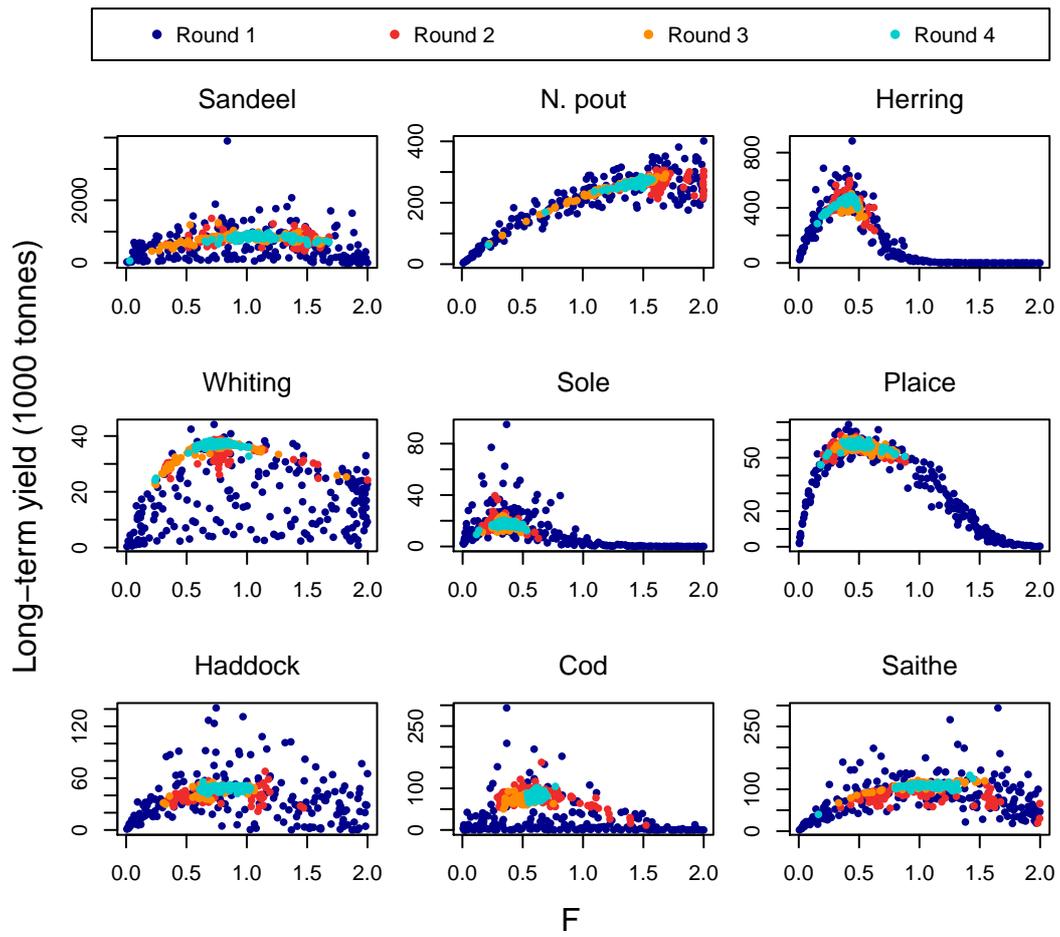}
\end{center}
\caption{The median long-term yield predicted from the ensemble model.}
\label{fig:ens_Yield}
\end{figure}

\begin{figure}[!]
\begin{center}
    \includegraphics{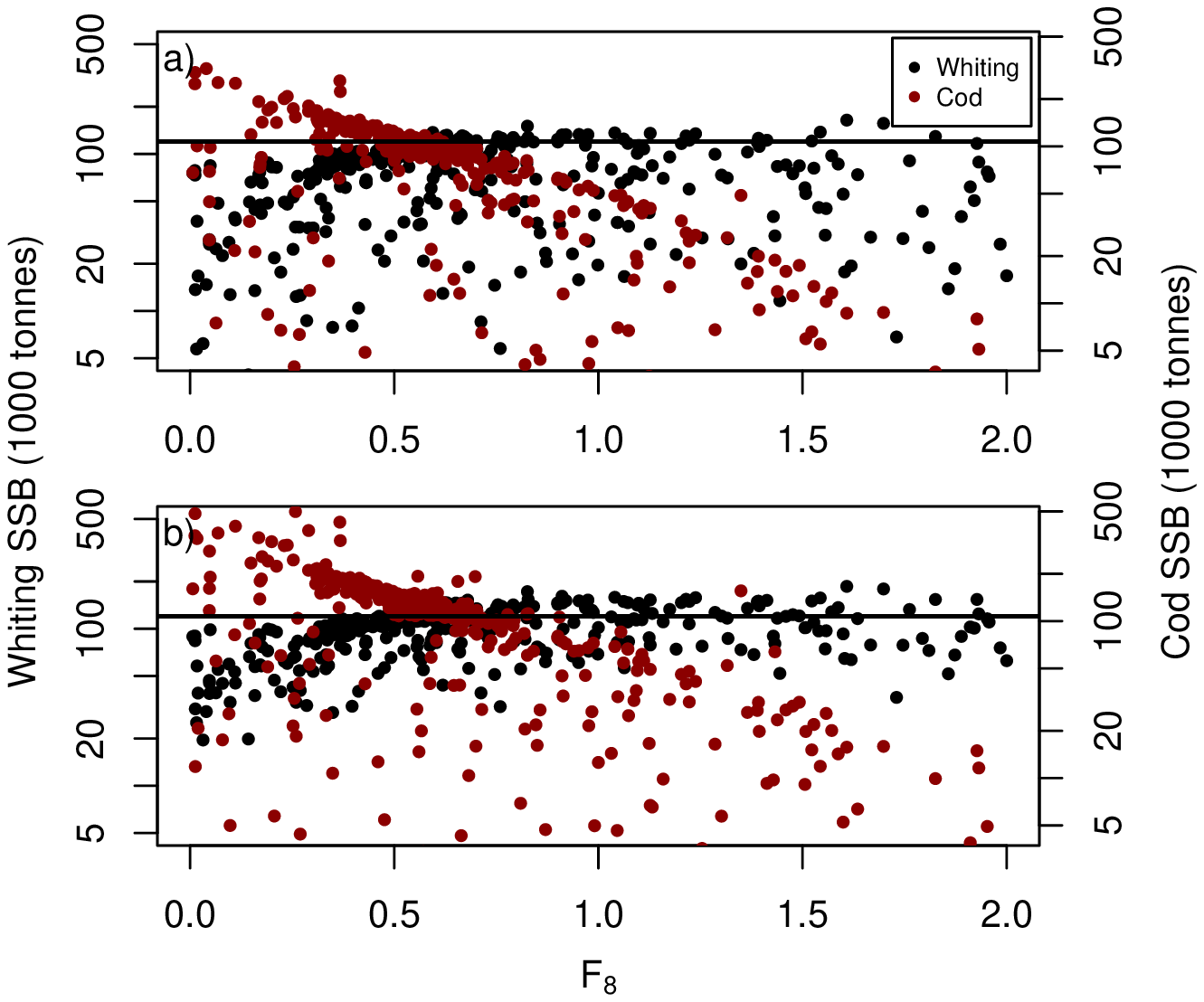}
\end{center}
\caption{The 5th (a) and 25th (b) percentile of the long-term SSB for cod and whiting under different fishing mortality rates of cod. The solid line is the $B_{lim}$ for each species.}
\label{fig:codWhiting}
\end{figure}


\subsection{Gaussian process emulator}

We fitted both the long-term yield and the 25th percentile of the long-term SSB for 100 iterations of the ensemble model. Table \ref{tb:acpt_runs} shows the number of scenarios in each round and the number of species with acceptable risk, a probability that the long-runs SSB  is above $B_{lim}$ of 0.75 or more. In later rounds the number of species with acceptable risk increases.

\subsection{Nash equilibria}

Out of the 100 potential $\bm{F}_{Nash}$ values found in the fourth round, 39 of them satisfied Definition \ref{def:nash}. For these 39 $\bm{F}_{Nash}$ values, we calculated the revenue of the long-term yield. Figure \ref{fig:final} shows the marginal distributions of the accepted $\bm{F}_{Nash}$ values, with the solid line showing the $\bm{F}_{Nash}$ value that led to the highest revenue. 
The revenues of all the $\bm{F}_{Nash}$ values were between \pounds{}1.7 billion and \pounds{}2.2 billion, larger than the revenue in 2017, \pounds{}1.3 billion. See Table S1 in the supplementary material for the 39 $\bm{F}_{Nash} values$.

\begin{figure}[!]
\begin{center}
    \includegraphics{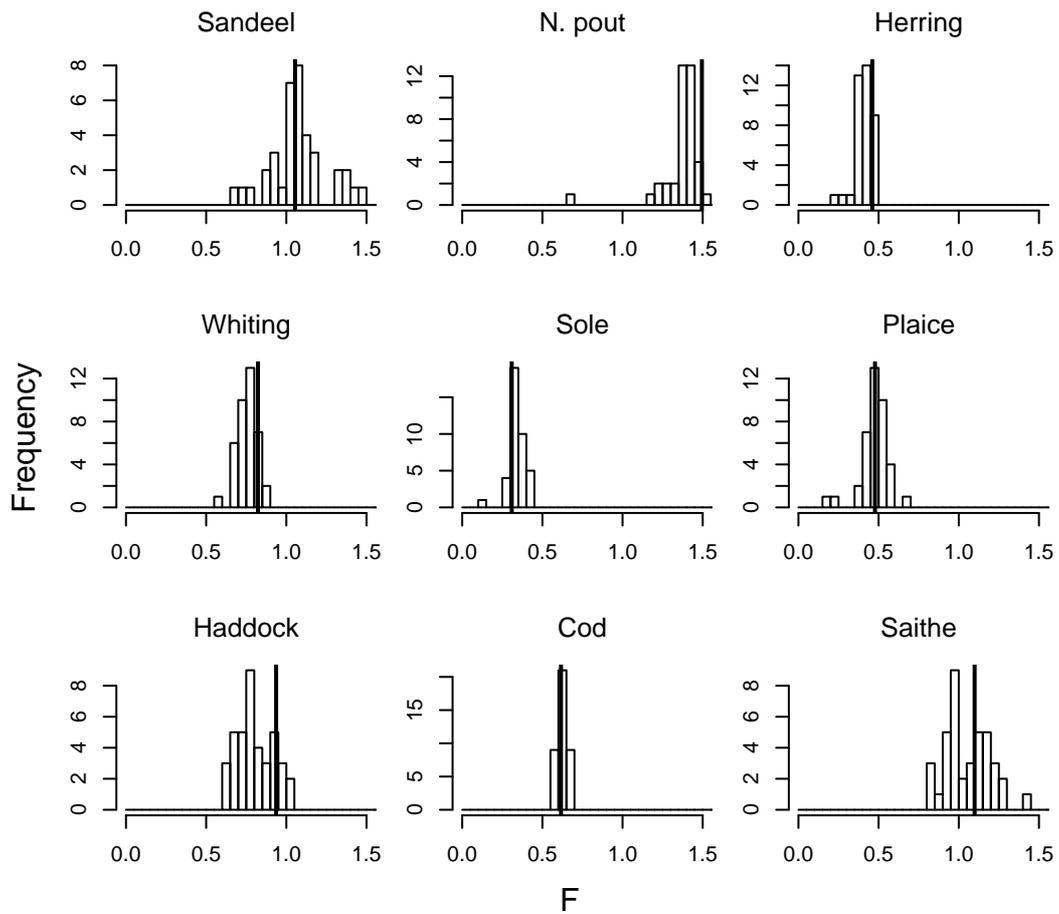}
\end{center}
\caption{The 39 Nash equilibria found in the fourth round. The solid line is the Nash equilibrium that generates the highest revenue.}
\label{fig:final}
\end{figure}

\begin{table}[ht]
\caption{The number of scenarios that have acceptable risk to the species long-term SSB. Acceptable risk to a species is that the long-term SSB is above $B_{lim}$ with a probability of 0.75 or more.} 
\label{tb:acpt_runs}
\begin{center}
\begin{tabular}{ccccc} 
\hline 
\# species & Rnd 1 & Rnd 2 & Rnd 3 & Rnd 4 \\ 
\hline
0 & $0$ & $0$ & $0$ & $0$ \\ 
1 & $14$ & $0$ & $0$ & $0$ \\ 
2 & $45$ & $0$ & $0$ & $0$ \\ 
3 & $42$ & $0$ & $0$ & $0$ \\ 
4 & $37$ & $1$ & $0$ & $0$ \\ 
5 & $31$ & $12$ & $1$ & $0$ \\ 
6 & $19$ & $33$ & $2$ & $0$ \\ 
7 & $8$ & $43$ & $28$ & $13$ \\ 
8 & $0$ & $11$ & $46$ & $48$ \\ 
9 & $0$ & $0$ & $23$ & $39$ \\ 
\end{tabular} 
\end{center}
\end{table}

\section{Discussion}
\label{sec:discussion}

In this paper, we showed that using a SS-MSY is only possible under very strict assumptions. We demonstrated how to calculate reference points using a specific definition of MMSY, the Nash equilibrium \citep{Farcas_Rossberg,norrstrom2017nash,thorpe17}, with a caveat for the risk of species collapse for nine species in the North Sea. We did this by combining multiple simulators using an ensemble model, removing the arbitrary choices of which simulator to use to calculate the reference points. We found that the Nash equilibrium led to higher fishing mortality rates than SS-MSY, leading to an increase in the long-term yields and revenue. 
To our knowledge, ensemble modelling has never been used to calculate multispecies reference points before. 

We found that the $\bm{F}_{Nash}$ values were generally higher than SS-MSY. Fishing predators at higher levels can relieve stress on prey, leading to an increase in prey, which can be exploited by the fishery \citep{Andersen_15}. These interactions are not accounted for when calculating SS-MSY, therefore adopting a MMSY means it is possible to have a higher yield \citep{Beddington_cooke, norrstrom2017nash}. 
Furthermore the revenue generated from MMSY was greater than current levels, allowing an economic gain for fishers and their families, something that is not always the case for SS-MSY \citep{Giron_2018}.

A common criteria when defining SS-MSY is that the SSB is larger than $B_{lim}$ with a probability greater than 0.95 \citep{ices_advice}. We demonstrated that the SSB of a single species not only depends on its own fishing mortality, but also the fishing mortality of the other species.
For example, only a small range of cod fishing mortality would lead to both whiting and cod's long-term SSB being above $B_{lim}$ with a probability greater than 0.75. However, no combination of $\bm{F}$ values would result in both whiting and cod's long-term SSB being above $B_{lim}$ with a probability greater than 0.95 (Figure \ref{fig:codWhiting}). This effect between whiting and cod was also found by EwE \citep{Mackinson_09} and the stochastic multispecies model \citep{sms,Kempf_10}. 
As it is impossible to find $\bm{F}$ values that satisfy the 0.95 probability criteria for all nine species in this study, we reduced our criteria to 0.75.
In general uncertainty is subjective, specific to the decision maker, the study, the information and the simulators \citep{gelman}.
In this study our certainty is limited to the simulators used, and could be reduced if they were improved, however, we were able quantify this uncertainty in a robust and interpretable manner \citep{harwood_stokes}. Currently when calculating SS-MSY, large amounts of uncertainty are ignored, e.g. species interactions, and thus estimations of probability are not robust, which makes the 0.95 caveat rather arbitrary.

%
Generally, fisheries managers select a single simulator for a species, from a set of competing simulators to calculate reference points \citep[e.g][]{ices_advice},
however, not including species interactions can lead to inconsistencies in the reference points. 
Should a manager chose a simulator for defining an MMSY, they would have to decide which simulator based on an arbitrary choice, and the values of the reference points are sensitive to the simulator (see supplementary material Figures S1-S4) \citep{Gaichas_08}.
Furthermore, choosing a simulator, without accounting for its model discrepancy \citep{kennedy_ohagan}, can lead to biased advice. For example, if we chose LeMans, sandeel yields would be consistently under-estimated, however correcting for the discrepancy would lead to estimations that were closer to the true yields (Figure \ref{fig:histYield}).
In general, no simulator is uniformly better than the others \citep{chandler}. 
In our example, mizer captures the dynamics of the saithe yields, 
however it does not capture the cod yields as well (Figure \ref{fig:histYield}).
We combined four different simulators, accounting for their discrepancies and uncertainties, to define the reference points, suggesting their values are no longer sensitive to the simulator selection \citep{spence_ff}. 

Due to the robust quantification of uncertainty in the ensemble model, we found 39 different $\bm{F}_{Nash}$ values. In practice, a manager would have to decide which of the $\bm{F}_{Nash}$ values is the `best', which is dependent on their needs and priorities. For example, they may want to maximise the total revenue or to minimise the risk to a specific species. In general, the manager's utility can be computed for the different MMSYs and then they can decide which of them is the `best'. We calculated the revenue for each $\bm{F}_{Nash}$ value, and, if we were to give advice, we would select the $\bm{F}_{Nash}$ value that would lead to the highest revenue, as shown in Figure \ref{fig:final}.

The results should be interpreted in the light of the limitations of the four simulators used in this study, which were the only ones available. Using as many simulators as possible would improve the robustness of the results, however it would be more beneficial to use better or improved simulators.
By robustly quantifying the uncertainty, the ensemble model uses all of the information from the simulators. If there was no, or very little, information in all the simulators then the ensemble model would give very uncertain predictions. 
Currently the ensemble model of \citet{spence_ff} assumes that the discrepancies of the simulators are the same in the future as they are in the past, for example a simulator that was uncertain at predicting the past would also be uncertain when predicting the future. More work is required to find the predictive power of these simulators, so we can include this information in the ensemble model.


When calculating the Nash equilibrium, we would like to use a sequential algorithm, such as in \citet{norrstrom2017nash}, which would require running the simulators and the ensemble model many times. Currently this is not feasible due to computational and time constraints caused by the simulators.
To limit the number of simulator runs required, we used a Gaussian process emulator to predict, with uncertainty, what the ensemble model would say for all future scenarios. Gaussian process emulators have been used in other fields when simulators are expensive to run \citep[e.g][]{veron,KENNEDY20061301}.
This allowed us to limit simulator runs to fishing scenarios that may be close to the Nash equilibrium and result in acceptable risk (Table \ref{tb:acpt_runs}), or where the emulator was unsure of the outcome. 
Although replacing the ensemble model with a Gaussian process leads to uncertainty in the final $\bm{F}_{Nash}$ values, this would not matter in practice, as the uncertainty Gaussian process will be small.
Using the ensemble model and the Gaussian process emulator allows for the calculation of MMSY in a robust and timely manner.




The Nash equilibrium was calculated for nine species in the North Sea, with caveats for the risk of stock collapse, although the methods described would be applicable for any definition of MMSY, or even SS-MSY, at any location. 
Alternative objectives could be ecosystem based yield \citep{Steel_11} or to aim to either maximise profits in the fisheries (e.g. using an MEY approach \citep{Dichmont16,Pascoe_18,Guillen_13}) or focus on the efficiency of the fishing practice. The latter, for example, could aim to define the reference points based on marginal value of yields, apply pareto-efficiency criteria or include joint-technology in production (i.e. mixed fisheries considerations). 
In this paper, the Nash equilibrium was chosen as it is a way of combining the SS-MSY with the MMSY as aligned concepts of MSY and EBFM \citep{norrstrom2017nash}. 


\section{Conclusion}
In this study we calculated MMSY in the North Sea using an ensemble model, demonstrating that it can lead to sustainable yields whilst ensuring ecosystem health is not diminished.
This approach can be adopted by fisheries scientists and mangers worldwide, taking account of structural uncertainties and removing arbitrary modelling decisions, leading to more robust, and therefore better science and management.
The reference points can be applied to problems in other fields such as climate science, epidemiology or systems biology.
Using the methods described in this paper, we were able to provide a practical tool to optimise any objective function for use by scientists and managers alike.

\section*{Acknowledgements}
The work was funded by the Department for Environment, Food and Rural Affairs (Defra). We would like to thank Robert Thorpe, Michaela Schratzberger and Paul Dolder for comments on earlier versions of the manuscript.

\section*{Authors contribution} MAS, HJB and KA conceived the ideas and designed the methodology; NDW, AM and MAS extracted data for the study; MAS, KA and HJB ran simulators; MAS and KA led the writing of the manuscript. All authors contributed critically to the drafts and gave final approval for publication.

\section*{Data availability statement}
Data sharing is not applicable to this article as no new data were created; rather, data were acquired from existing published sources (all sources are cited in the text), or are described, figured and tabulated within the manuscript or supplementary information of this article.

\newpage
\beginsupplement
\section{MSY}
\subsection{Function definition}
Let $y_i(F_1,F_2,\ldots{}F_n)$ be a continuous function such that
\begin{equation*}
        f_{1,i}:D\xrightarrow{}\mathbb{R}_{\geq0}
\end{equation*}
with
\begin{equation*}
D=\left\{(x_1\times{}x_2\times\ldots\times{}x_n) \in \mathbb{R}^n_{\geq0}\right\}.
\end{equation*}

\subsection{Proof of Proposition 1}
\begin{proof}
As $f_{1,i}(F_i,\bm{F}_{-i})$ is a continuous function, then $F_{MSY,i}(\bm{F}_{-i})={\text{arg}\sup}_{F_i}\left(f_{1,i}(F_i,\bm{F}_{-i})\right)$ is also a continuous function due to the maximum theorem \citep{ok_07}. 
Suppose 
\begin{equation*}
    \frac{\partial{}F_{MSY,i}(\bm{F}_{-i})}{\partial{}F_j}=0
\end{equation*}
then
\begin{eqnarray*}
    F_{MSY,i} &=& F_{MSY,i}(\bm{F}_{-i,j},F_j)\\
     &=& \lim_{\delta \to 0}F_{MSY,i}(\bm{F}_{-i,j},F_j + \delta)\\
     &=& F_{MSY,i}.
\end{eqnarray*}
Now suppose 
\begin{equation*}
    \frac{\partial{}F_{MSY,i}(\bm{F}_{-i})}{\partial{}F_j}>0
\end{equation*}
then
\begin{eqnarray*}
F_{MSY,i} &=& F_{MSY,i}(\bm{F}_{-i,j},F_j)\\
&<&\lim_{\delta \to 0}F_{MSY,i}(\bm{F}_{-i,j},F_j + \delta)\\
&=&F_{MSY,i}'
\end{eqnarray*}
hence $F_{MSY,i}\neq{}F_{MSY,i}'$. Alternatively suppose
\begin{equation*}
    \frac{\partial{}F_{MSY,i}(\bm{F}_{-i})}{\partial{}F_j}<0
\end{equation*}
then
\begin{eqnarray*}
F_{MSY,i} &=& F_{MSY,i}(\bm{F}_{-i,j},F_j)\\
&>&\lim_{\delta \to 0}F_{MSY,i}(\bm{F}_{-i,j},F_j + \delta)\\
&=&F_{MSY,i}'
\end{eqnarray*}
hence $F_{MSY,i}\neq{}F_{MSY,i}'$. Hence
\begin{equation*}
    \frac{\partial{}F_{MSY,i}(\bm{F}_{-i})}{\partial{}F_j}=0
\end{equation*}
$\forall{}j\neq{i}$ if Definition 2 is to exist.
\end{proof}

\section{Gaussian process emulator}
A stochastic process $f_{j,i}(\bm{F})$ is said to be a Gaussian process if the random vector, $\bm{f}_{j,i}=\left(f_{j,i}(\bm{F}^{(1)}),f_{j,i}(\bm{F}^{(2)}),\ldots,f_{j,i}(\bm{F}^{(n)})\right)'$, for $j=1$ and $2$ and $i=1,\ldots{}9$, has the distribution
\begin{equation*}
    \bm{f}_{j,i}\sim{}N(\bm{\eta}_{j,i},K_{j,i}).
\end{equation*}
Similarly to a multivariate Gaussian, completely specified by a mean vector and a covariance matrix, the Gaussian Process is parametried by a mean and a covariance function
with 
\begin{equation*}
    \eta_{j,i}(\bm{F})=E(f_{j,i}(\bm{F}))
\end{equation*}
and
\begin{equation*}
    k_{j,i}(\bm{F}^{(l)},\bm{F}^{(l')})=Cov\left(f_{j,i}(\bm{F}^{(l)}),f_{j,i}(\bm{F}^{(l')})\right)
\end{equation*}
respectively, returning the mean of a random variable and the covariance between two random variables, as function of the inputs only \citep{noe_19}. In this work we consider $\eta_{j,i}$ to be a generalised additive model \citep{wood_gam}, see Section \ref{sec:sup_gam}. We used the covariance $k_{j,i}(\bm{F}^{(l)},\bm{F}^{(l')})=C_{j,i,1}(F_1^{(l)},F^{(l')}_1)\otimes{}\ldots\otimes{}C_{j,i,9}(F_9^{(l)},F_9^{(l')})$ with a Mat\`{e}rn covariance function,
\begin{eqnarray*}
	C_{j,i,d}(F_d^{(l)},F_d^{(l')}) =&& \sigma^2\left(1+\sqrt{5}\frac{|F_d^{(l')}-F_d^{(l)}|}{\rho_{j,i,d}}+\frac{5}{3}\left(\frac{|F_d^{(l')}-F_d^{(l)}|}{\rho_{j,i,d}}\right)^2\right)\\
&&\times\exp\left(\frac{-\sqrt{5}|F_d^{(l')}-F_d^{(l)}|}{\rho_{j,i,d}}\right),
\end{eqnarray*}
for $d=1\ldots{}9$. 

Denote the observed data $\mathcal{D}= \left\{(\bm{F}^{(1)},y_{j,i}^{(1)}),\ldots{},(\bm{F}^{(m)},y_{j,i}^{(m)}) \right\}$ to be training data, with inputs $\bm{F}^{(l)}$ and outputs $y_{j,i}^{(l)}$ for $l=1,\ldots{},m$. The outputs are denoted $\bm{y}_{j,i}=(y_{j,i}^{(1)},\ldots,y_{j,i}^{(m)})'$. Conditioning the Gaussian process on the observed data
\begin{equation*}
    f_{j,i}(\bm{F})\sim{}GP(\tilde{f}_{j,i}(\bm{F}),s(\bm{F},\bm{F}'))
\end{equation*}
with
\begin{equation*}
    \tilde{f}_{j,i}(\bm{F})=\eta_{j,i}(\bm{F}) + \bm{k}(\bm{F})'(K + \sigma^2I)^{-1}(\bm{y}_{j,i}-\bm{\eta}_{j,i})
\end{equation*}
and
\begin{equation*}
    s(\bm{F},\bm{F}')=k(\bm{F},\bm{F}') - \bm{k}(\bm{F})'(K + \sigma^2I)^{-1}\bm{k}(\bm{F}'),
\end{equation*}
where $\bm{k}(\bm{F})=(k(\bm{F},\bm{F}^{(1)}),\ldots{}k(\bm{F},\bm{F}^{(m)}))'$, $K=\left[k(\bm{F}^{(l)},\bm{F}^{(l')})\right]^{m}_{l,l'=1}$ is the training covariance,
$\bm{\eta}_{j,i}=(\eta_{j,i}(\bm{F}^{(1)}),\ldots,\eta_{j,i}(\bm{F}^{(m)}))'$ and $I$ is the identity matrix of dimensions $m$ \citep{noe_19}.

\subsection{Generalised additive models}
\label{sec:sup_gam}
The mean function from the Gaussian process emulator was a cubic spline such that
\begin{equation*}
    s(x)=\sum_{h=1}^{H}1_{x\geq\lambda_{h}}\beta_h(x-\lambda_k)^3,
\end{equation*}
where $H$ is the number of `knots' and $\lambda_k$ is the location of the $k$th `knot'.

\subsection*{Sandeel}
The yield for sandeel was
\begin{equation*}
    \eta_{1,1}(\bm{F})=\beta_{1,1} + s(F_1) + s(F_3) + s(F_5),
\end{equation*}
and the SSB was
\begin{equation*}
    \eta_{2,1}(\bm{F})= \beta_{2,1} + s(F_1) + s(F_2) + s(F_3) + s(F_4) + s(F_5) + s(F_8) + s(F_9).
\end{equation*}

\subsection*{Norway pout}
The yield for Norway pout was
\begin{equation*}
    \eta_{1,2}(\bm{F})=\beta_{1,2} + s(F_2) + s(F_3) + s(F_5),
\end{equation*}
and the SSB was
\begin{equation*}
    \eta_{2,2}(\bm{F})= \beta_{2,2} + s(F_1) + s(F_2) + s(F_3) + s(F_8) + s(F_9).
\end{equation*}

\subsection*{Herring}
The yield for herring was
\begin{equation*}
    \eta_{1,3}(\bm{F})= \beta_{1,3} +s(F_3) + s(F_5) + s(F_8) + s(F_9),
\end{equation*}
and the SSB was
\begin{equation*}
    \eta_{2,3}(\bm{F})=\beta_{1,3} + s(F_1) + s(F_2) + s(F_3) + s(F_4) + s(F_5) + s(F_6) + s(F_8) + s(F_9).
\end{equation*}

\subsection*{Whiting}
The yield for whiting was
\begin{equation*}
    \eta_{1,4}(\bm{F})= \beta_{1,4} + s(F_3) + s(F_4),
\end{equation*}
and the SSB was
\begin{equation*}
    \eta_{2,4}(\bm{F})=\beta_{2,4} + s(F_1) + s(F_2) + s(F_3) + s(F_4) + s(F_5) + s(F_7) + s(F_8) + s(F_9).
\end{equation*}

\subsection*{Sole}

The yield for sole was
\begin{equation*}
    \eta_{1,5}(\bm{F})=\beta_{1,5} +s(F_3) + s(F_4) + s(F_5),
\end{equation*}
and the SSB was
\begin{equation*}
    \eta_{2,5}(\bm{F})=\beta_{2,5} + s(F_1) + s(F_2) + s(F_3) + s(F_4) + s(F_5) + s(F_6) + s(F_7) + s(F_8) + s(F_9).
\end{equation*}

\subsection*{Plaice}

The yield for plaice was
\begin{equation*}
    \eta_{1,6}(\bm{F})=\beta_{1,6} +s(F_1) + s(F_3) + s(F_6) + s(F_7),
\end{equation*}
and the SSB was
\begin{equation*}
    \eta_{2,6}(\bm{F})=\beta_{2,6} +s(F_1) + s(F_2) + s(F_3) + s(F_4).
\end{equation*}

\subsection*{Haddock}
The yield for haddock was
\begin{equation*}
    \eta_{1,7}(\bm{F})=\beta_{1,7} +s(F_3) + s(F_4) + s(F_7) + s(F_8),
\end{equation*}
and the SSB was
\begin{equation*}
    \eta_{2,7}(\bm{F})= \beta_{2,7} + s(F_1) + s(F_2) + s(F_3) + s(F_4) + s(F_5) + s(F_6) + s(F_7) + s(F_8).
\end{equation*}

\subsection*{Cod}
The yield for cod was
\begin{equation*}
    \eta_{1,8}(\bm{F})=\beta_{1,8} +s(F_3) + s(F_5) + s(F_8),
\end{equation*}
and the SSB was
\begin{equation*}
    \eta_{2,8}(\bm{F})= \beta_{2,8} + s(F_2) + s(F_3) + s(F_4) + s(F_5) + s(F_8).
\end{equation*}

\subsection*{Saithe}

The yield for saithe was
\begin{equation*}
    \eta_{1,9}(\bm{F})=\beta_{1,9} + s(F_3) + s(F_5) + s(F_8) + s(F_9),
\end{equation*}
and the SSB was
\begin{equation*}
    \eta_{2,9}(\bm{F})=\beta_{2,9} + s(F_1) + s(F_2) + s(F_3) + s(F_4) + s(F_8) + s(F_9).
\end{equation*}

\section{Results}
\subsection{Simulator runs}
Figures \ref{fig:ewe_res}-\ref{fig:fs_res} show the long-term yields from the simulators.

\begin{figure}[ht]
\begin{center}
\includegraphics{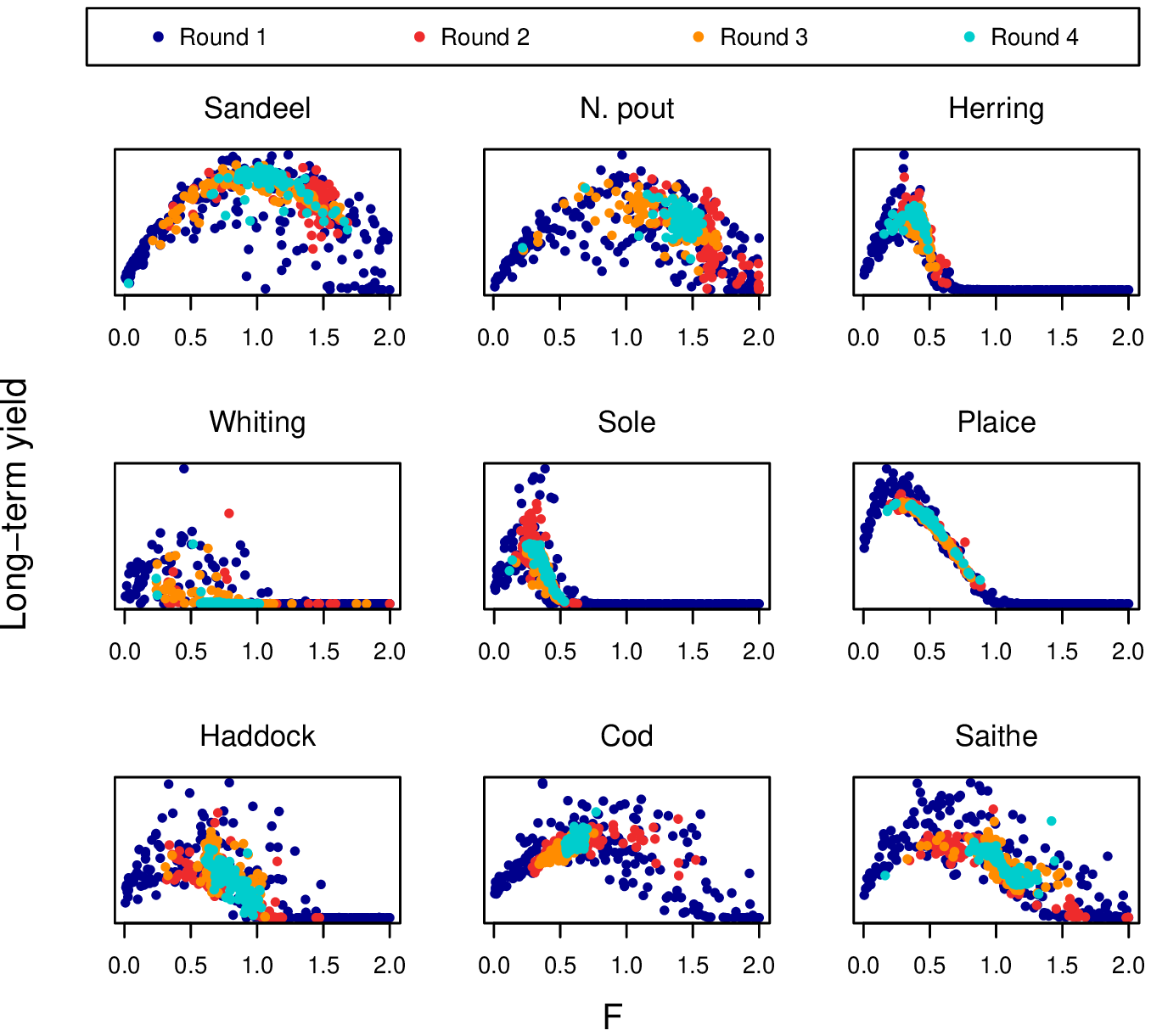}
\end{center}
\caption{The long-term yield predictions from EcoPath with EcoSim.}
\label{fig:ewe_res}
\end{figure}

\begin{figure}[ht]
\begin{center}
\includegraphics{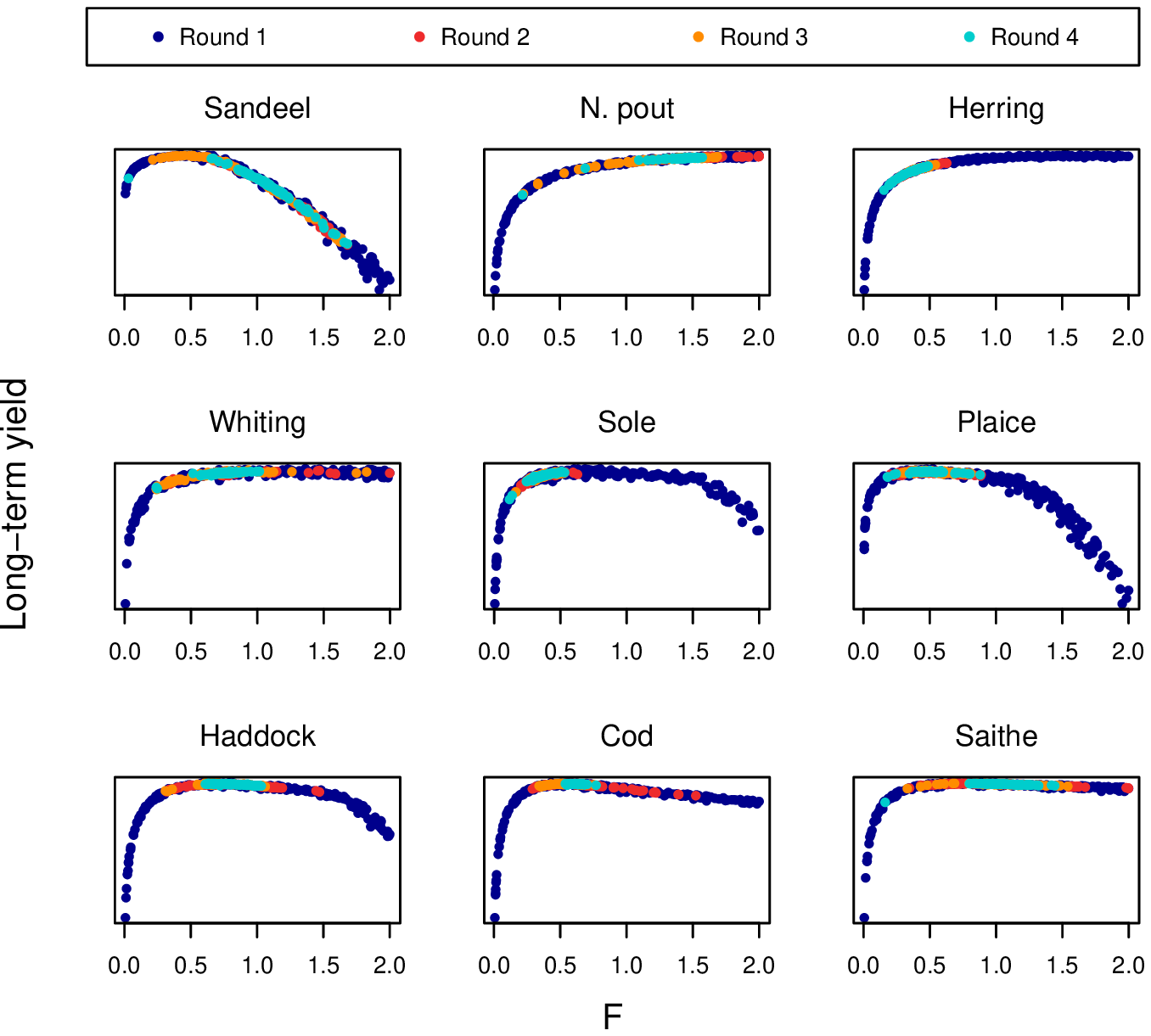}
\end{center}
\caption{The long-term yield predictions from LeMans.}
\label{fig:lm_res}
\end{figure}

\begin{figure}[ht]
\begin{center}
\includegraphics{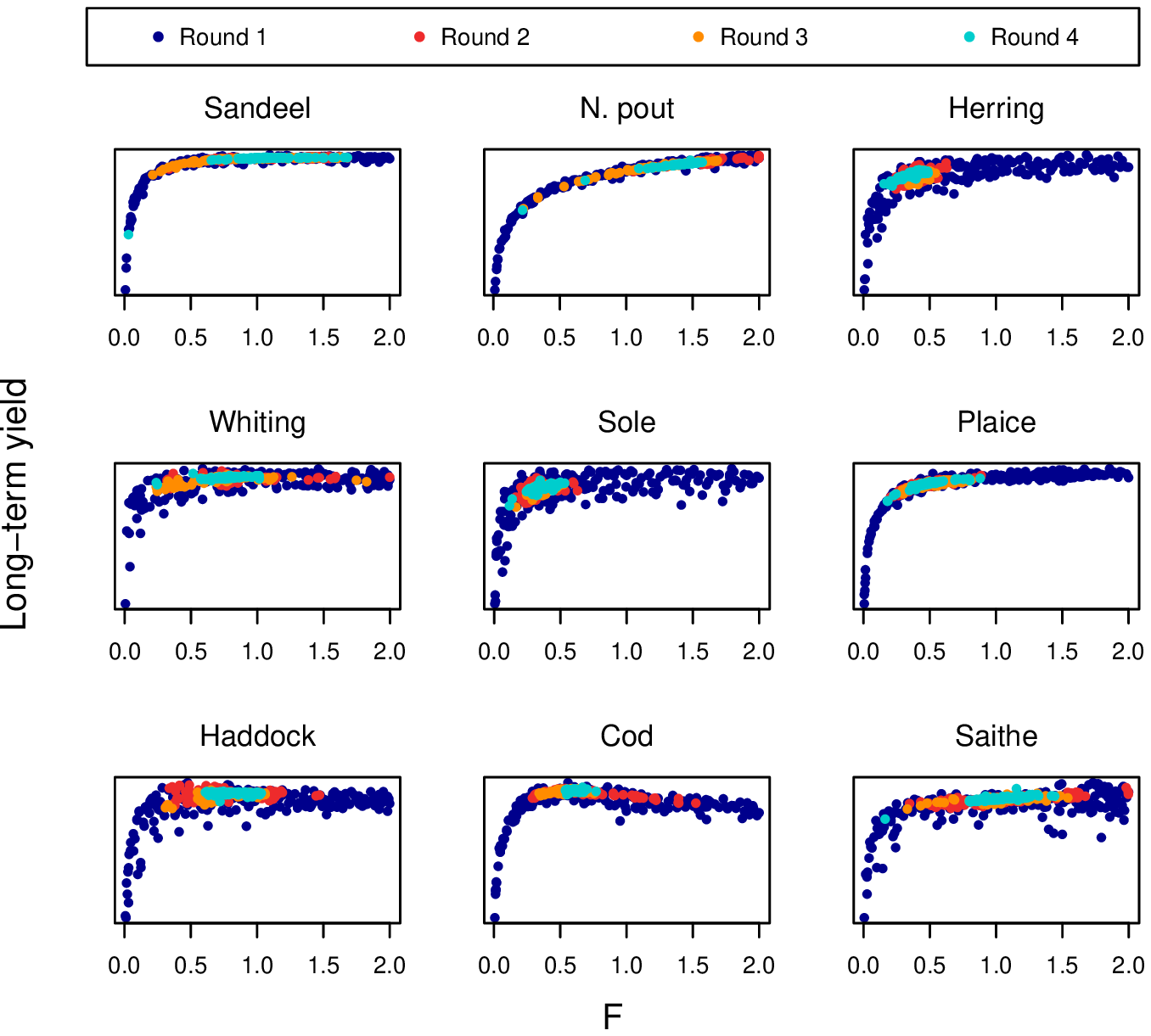}
\end{center}
\caption{The long-term yield predictions from mizer.}
\label{fig:miz_res}
\end{figure}

\begin{figure}[ht]
\begin{center}
    \includegraphics{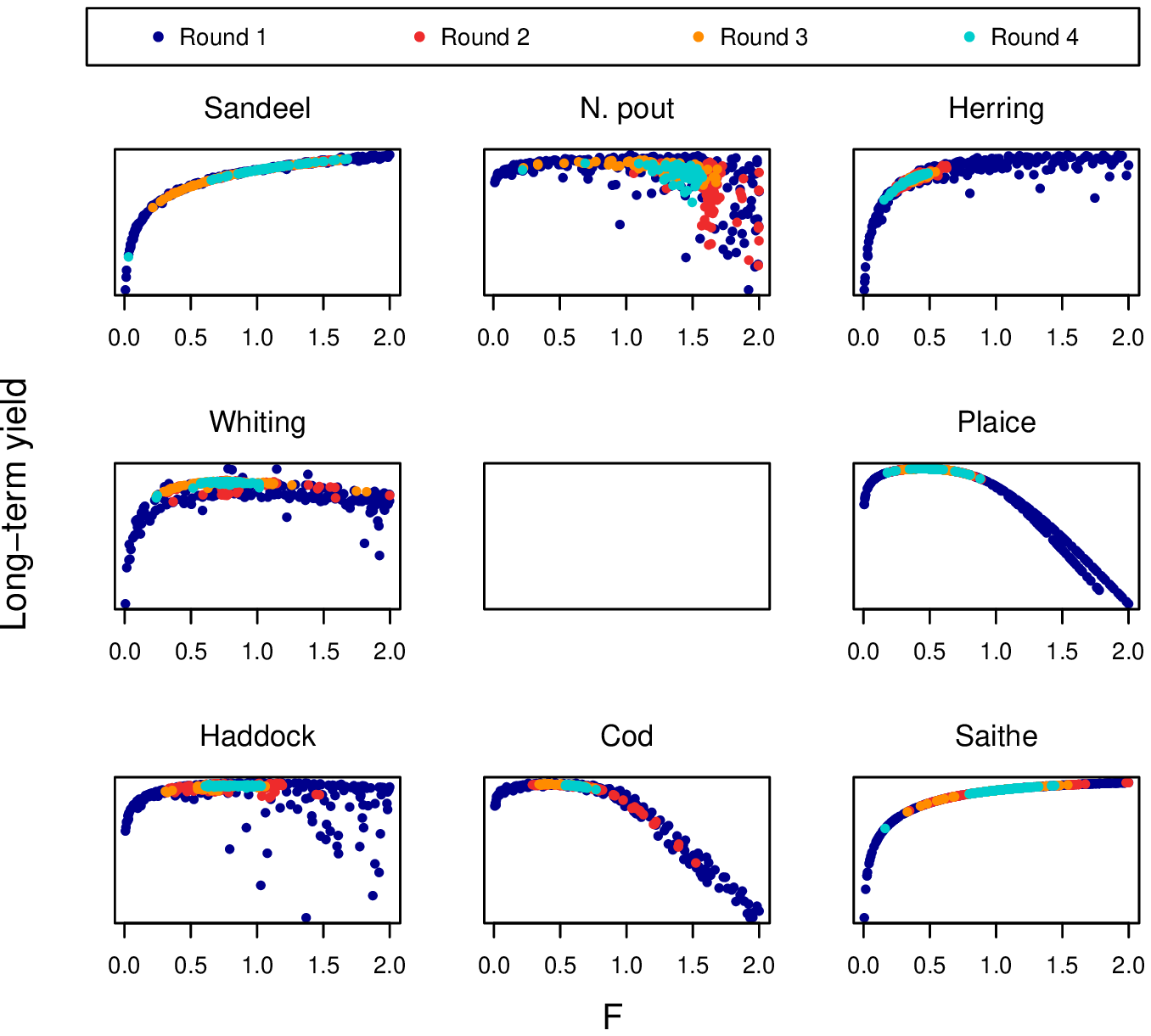}
\end{center}
\caption{The long-term yield predictions from FishSums.}
\label{fig:fs_res}
\end{figure}

\subsection{Spawning stock biomass}
Figure \ref{fig:ens_SSB} shows the 25th percentile of the long-term SSB. The solid line is the $B_{lim}$ for each species.

\subsection{Reference points}
Table \ref{tb:Nash} shows the 39 Nash equilibria and their expected long-term revenue that satisfy have acceptable risk to the species long-term SSB.

\begin{landscape}
\begin{longtable}{@{\extracolsep{2pt}} cccccccccc} 
  \caption{The 39 values of $\bm{F}_{Nash}$ and their expected long-term revenue that we found in this study.} 
  \label{tb:Nash} \\
\hline \\
Sandeel & N.pout & Herring & Whiting & Sole & Plaice & Haddock & Cod & Saithe & Revenue (\pounds{}billions) \\ 
\hline \\[-1.8ex] 
$1.05$ & $1.49$ & $0.46$ & $0.82$ & $0.31$ & $0.48$ & $0.94$ & $0.62$ & $1.10$ & $2.16$ \\ 
$1.10$ & $1.47$ & $0.44$ & $0.87$ & $0.31$ & $0.48$ & $0.76$ & $0.63$ & $1.16$ & $2.15$ \\ 
$1.11$ & $1.44$ & $0.39$ & $0.85$ & $0.37$ & $0.51$ & $0.82$ & $0.63$ & $1.13$ & $2.12$ \\ 
$1.04$ & $1.42$ & $0.40$ & $0.78$ & $0.31$ & $0.49$ & $0.97$ & $0.64$ & $1.09$ & $2.11$ \\ 
$1.39$ & $1.41$ & $0.38$ & $0.86$ & $0.31$ & $0.44$ & $0.86$ & $0.66$ & $0.97$ & $2.10$ \\ 
$1.06$ & $1.38$ & $0.41$ & $0.77$ & $0.27$ & $0.50$ & $0.80$ & $0.64$ & $0.83$ & $2.09$ \\ 
$0.93$ & $1.40$ & $0.46$ & $0.82$ & $0.35$ & $0.51$ & $0.77$ & $0.62$ & $1.16$ & $2.09$ \\ 
$1.31$ & $1.44$ & $0.44$ & $0.82$ & $0.37$ & $0.39$ & $0.87$ & $0.65$ & $0.98$ & $2.09$ \\ 
$1.01$ & $1.38$ & $0.48$ & $0.78$ & $0.30$ & $0.46$ & $0.87$ & $0.60$ & $0.98$ & $2.08$ \\ 
$1.12$ & $1.39$ & $0.41$ & $0.81$ & $0.33$ & $0.53$ & $0.72$ & $0.65$ & $0.98$ & $2.08$ \\ 
$1.10$ & $1.53$ & $0.47$ & $0.74$ & $0.37$ & $0.50$ & $0.94$ & $0.61$ & $1.11$ & $2.08$ \\ 
$1.08$ & $1.44$ & $0.44$ & $0.77$ & $0.35$ & $0.48$ & $0.79$ & $0.62$ & $0.96$ & $2.07$ \\ 
$1.04$ & $1.16$ & $0.36$ & $0.79$ & $0.36$ & $0.55$ & $0.90$ & $0.65$ & $1.24$ & $2.05$ \\ 
$0.91$ & $1.39$ & $0.42$ & $0.74$ & $0.27$ & $0.40$ & $0.95$ & $0.58$ & $0.93$ & $2.05$ \\ 
$1.10$ & $1.39$ & $0.47$ & $0.76$ & $0.31$ & $0.51$ & $0.69$ & $0.60$ & $0.93$ & $2.05$ \\ 
$1.20$ & $1.42$ & $0.43$ & $0.76$ & $0.36$ & $0.51$ & $0.73$ & $0.63$ & $0.94$ & $2.05$ \\ 
$1.14$ & $1.45$ & $0.46$ & $0.79$ & $0.43$ & $0.42$ & $0.84$ & $0.63$ & $1.00$ & $2.04$ \\ 
$1.02$ & $1.36$ & $0.41$ & $0.76$ & $0.32$ & $0.18$ & $1.00$ & $0.64$ & $1.12$ & $2.03$ \\ 
$0.98$ & $1.36$ & $0.38$ & $0.79$ & $0.32$ & $0.23$ & $0.83$ & $0.66$ & $1.07$ & $2.03$ \\ 
$1.01$ & $1.43$ & $0.45$ & $0.74$ & $0.39$ & $0.48$ & $0.74$ & $0.59$ & $0.91$ & $2.03$ \\ 
$1.33$ & $1.38$ & $0.40$ & $0.78$ & $0.31$ & $0.48$ & $0.69$ & $0.59$ & $1.02$ & $2.02$ \\ 
$1.16$ & $1.44$ & $0.42$ & $0.82$ & $0.14$ & $0.49$ & $0.66$ & $0.63$ & $1.15$ & $2.02$ \\ 
$1.01$ & $1.41$ & $0.42$ & $0.70$ & $0.36$ & $0.56$ & $0.76$ & $0.56$ & $1.03$ & $2.01$ \\ 
$1.04$ & $1.39$ & $0.42$ & $0.75$ & $0.35$ & $0.44$ & $0.63$ & $0.62$ & $0.99$ & $2.01$ \\ 
$1.05$ & $1.24$ & $0.40$ & $0.74$ & $0.33$ & $0.43$ & $0.70$ & $0.65$ & $1.24$ & $2.01$ \\ 
$0.90$ & $1.35$ & $0.38$ & $0.71$ & $0.34$ & $0.44$ & $0.79$ & $0.65$ & $0.83$ & $2.01$ \\ 
$0.90$ & $1.30$ & $0.46$ & $0.72$ & $0.36$ & $0.49$ & $0.72$ & $0.57$ & $0.96$ & $1.98$ \\ 
$0.94$ & $1.46$ & $0.40$ & $0.65$ & $0.27$ & $0.48$ & $0.83$ & $0.63$ & $1.14$ & $1.98$ \\ 
$1.20$ & $1.37$ & $0.38$ & $0.69$ & $0.41$ & $0.59$ & $0.79$ & $0.56$ & $0.94$ & $1.98$ \\ 
$1.08$ & $1.20$ & $0.43$ & $0.70$ & $0.41$ & $0.45$ & $0.80$ & $0.63$ & $0.83$ & $1.97$ \\ 
$1.07$ & $1.43$ & $0.39$ & $0.65$ & $0.33$ & $0.43$ & $0.63$ & $0.55$ & $1.20$ & $1.95$ \\ 
$0.78$ & $1.35$ & $0.39$ & $0.71$ & $0.43$ & $0.42$ & $1.02$ & $0.66$ & $1.19$ & $1.92$ \\ 
$1.44$ & $1.47$ & $0.35$ & $0.73$ & $0.32$ & $0.46$ & $0.92$ & $0.64$ & $1.26$ & $1.91$ \\ 
$1.49$ & $1.35$ & $0.38$ & $0.78$ & $0.33$ & $0.55$ & $0.69$ & $0.63$ & $1.23$ & $1.90$ \\ 
$1.66$ & $1.43$ & $0.41$ & $0.80$ & $0.35$ & $0.69$ & $0.77$ & $0.61$ & $1.00$ & $1.88$ \\ 
$1.68$ & $1.42$ & $0.38$ & $0.82$ & $0.33$ & $0.52$ & $0.90$ & $0.65$ & $1.18$ & $1.83$ \\ 
$0.71$ & $1.40$ & $0.49$ & $0.58$ & $0.38$ & $0.47$ & $0.64$ & $0.55$ & $0.87$ & $1.81$ \\ 
$1.38$ & $0.69$ & $0.30$ & $0.72$ & $0.42$ & $0.57$ & $0.96$ & $0.67$ & $1.44$ & $1.75$ \\ 
$0.66$ & $1.28$ & $0.23$ & $0.69$ & $0.36$ & $0.51$ & $0.67$ & $0.60$ & $1.25$ & $1.75$ \\ 
\hline \\[-1.8ex] 
\end{longtable}
\end{landscape}

\begin{figure}[ht]
\begin{center}
    \includegraphics{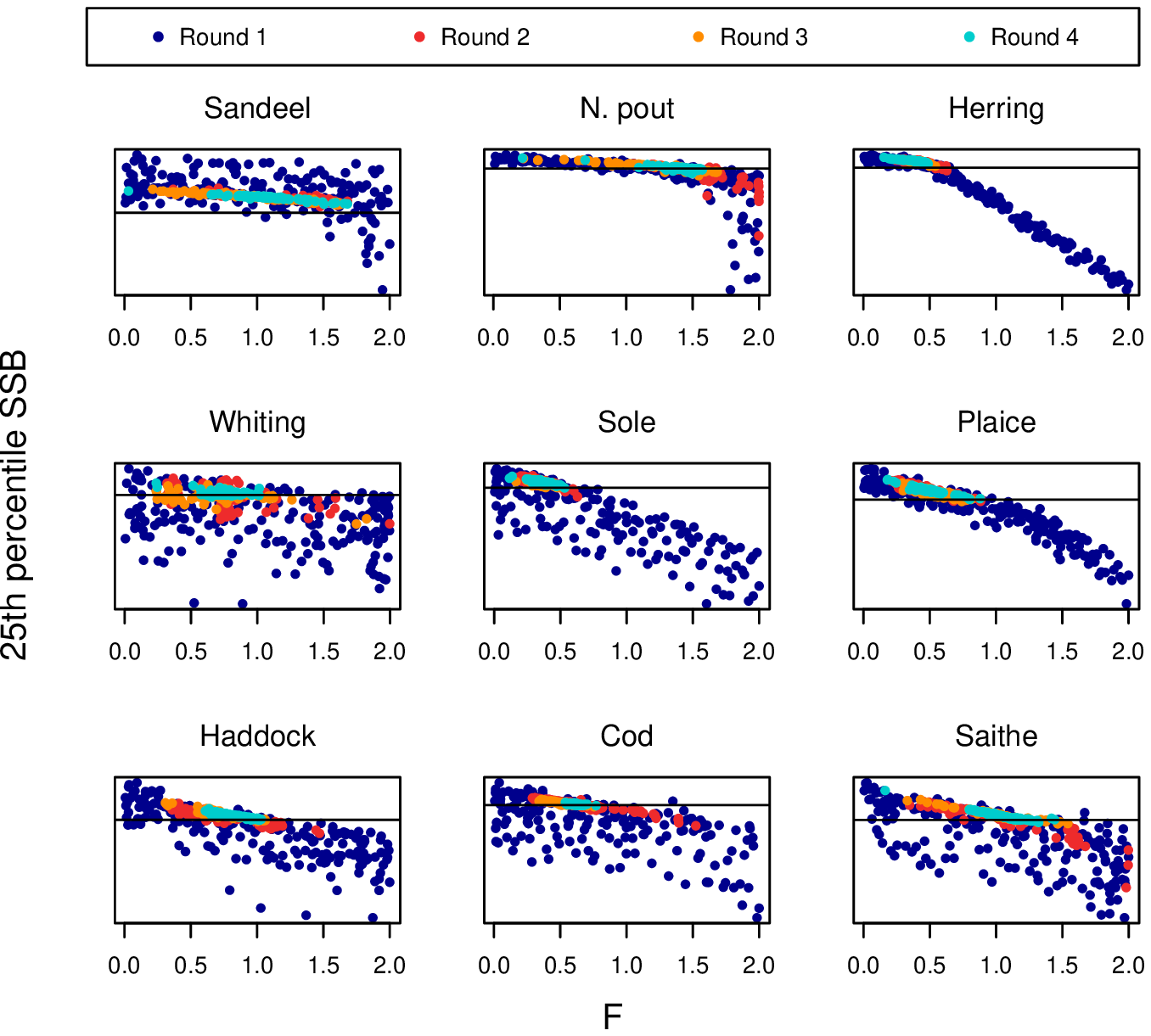}
\end{center}
\caption{The 25th percentile of the long-term spawning stock biomass. The solid line is the value for $B_{lim}$.}
\label{fig:ens_SSB}
\end{figure}

\subsection{Value of the yield}
Figure \ref{fig:revenue} shows the value of the yield for the 40 final Nash equilibria.

\begin{figure}[ht]
\begin{center}
    \includegraphics{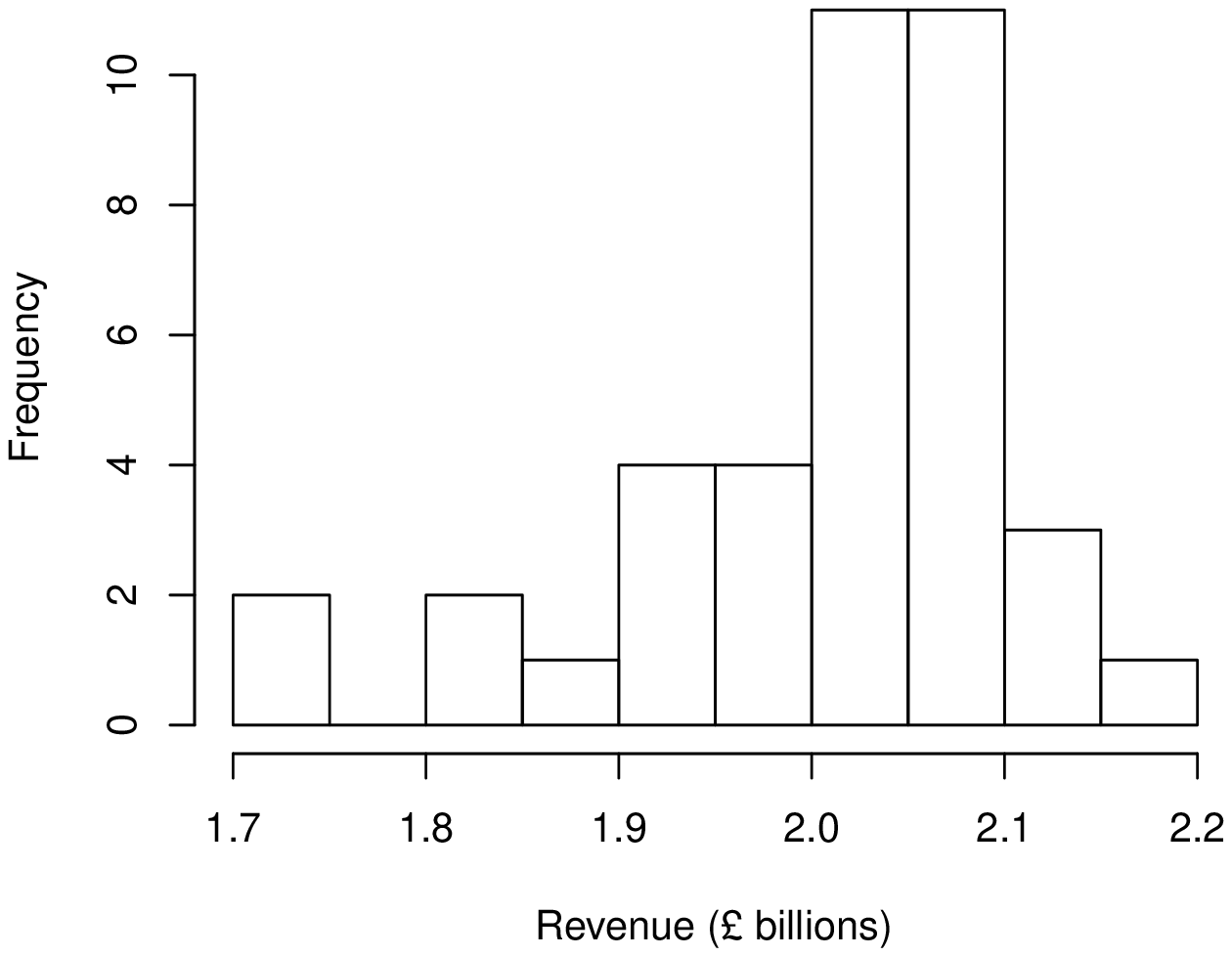}
\end{center}
\caption{The future annual revenue for the final Nash equilibria.}
\label{fig:revenue}
\end{figure}


\begin{thebibliography}{62}
\providecommand{\natexlab}[1]{#1}

\bibitem[{Andersen(2019)}]{Andersen_book}
Andersen, K.H. (2019) \emph{Fish {E}cology, {E}volution, and {E}xploitation {A}
  {N}ew {T}heoretical {S}ynthesis}.
\newblock Princeton University Press.

\bibitem[{Andersen \emph{et~al.}(2015)Andersen, Brander \&
  Ravn-Jonsen}]{Andersen_15}
Andersen, K.H., Brander, K. \& Ravn-Jonsen, L. (2015) Trade-offs between
  objectives for ecosystem management of fisheries.
\newblock \emph{Ecological Applications}, \textbf{25}, 1390--1396.

\bibitem[{Beddington \& Cooke(1982)}]{Beddington_cooke}
Beddington, J. \& Cooke, J. (1982) Harvesting from a prey-predator complex.
\newblock \emph{Ecological Modelling}, \textbf{14}, 155 -- 177.
\newblock Ecology, Renewable Resources and Optimal Control.

\bibitem[{Blanchard \emph{et~al.}(2014)Blanchard, Andersen, Scott, Hintzen,
  Piet \& Jennings}]{blanchard}
Blanchard, J.L., Andersen, K.H., Scott, F., Hintzen, N.T., Piet, G. \&
  Jennings, S. (2014) Evaluating targets and trade-offs among fisheries and
  conservation objectives using a multispecies size spectrum model.
\newblock \emph{Journal of Applied Ecology}, \textbf{51}, 612--622.

\bibitem[{Chandler(2013)}]{chandler}
Chandler, R.E. (2013) Exploiting strength, discounting weakness: combining
  information from multiple climate simulators.
\newblock \emph{Philosophical Transactions of the Royal Society A:
  Mathematical, Physical and Engineering Sciences}, \textbf{371}, 20120388.

\bibitem[{Collie \emph{et~al.}(2016)Collie, Botsford, Hastings, Kaplan,
  Largier, Livingston, Plag\'{a}nyi, Rose, Wells \& Werner}]{Collie_16}
Collie, J.S., Botsford, L.W., Hastings, A., Kaplan, I.C., Largier, J.L.,
  Livingston, P.A., Plag\'{a}nyi, E., Rose, K.A., Wells, B.K. \& Werner, F.E.
  (2016) Ecosystem models for fisheries management: finding the sweet spot.
\newblock \emph{Fish and Fisheries}, \textbf{17}, 101--125.

\bibitem[{Dichmont \emph{et~al.}(2010)Dichmont, Pascoe, Kompas, Punt \&
  Deng}]{Dichmont16}
Dichmont, C.M., Pascoe, S., Kompas, T., Punt, A.E. \& Deng, R. (2010) On
  implementing maximum economic yield in commercial fisheries.
\newblock \emph{Proceedings of the National Academy of Sciences}, \textbf{107},
  16--21.

\bibitem[{Essington \& Punt(2011)}]{essington_punt}
Essington, T. \& Punt, A. (2011) Implementing ecosystem-based fisheries
  management: Advances, challenges and emerging tools.
\newblock \emph{Fish and Fisheries}, \textbf{12}.

\bibitem[{Essington \& Plag\'{a}nyi(2013)}]{Essington_13}
Essington, T.E. \& Plag\'{a}nyi, E.E. (2013) {Pitfalls and guidelines for
  ``recycling'' models for ecosystem-based fisheries management: evaluating
  model suitability for forage fish fisheries}.
\newblock \emph{ICES Journal of Marine Science}, \textbf{71}, 118--127.

\bibitem[{FAO(2009)}]{FA02009}
FAO (2009) {How to Feed the World in 2050 - Food and Agriculture organization}.
\newblock Http://www.fao.org/docrep/pdf/012/ak542e/ak542e00.pdf.

\bibitem[{Farcas \& Rossberg(2016)}]{Farcas_Rossberg}
Farcas, A. \& Rossberg, A.G. (2016) Maximum sustainable yield from interacting
  fish stocks in an uncertain world: two policy choices and underlying
  trade-offs.
\newblock \emph{ICES Journal of Marine Science}, \textbf{73}, 2499--2508.

\bibitem[{Fulton \emph{et~al.}(2003)Fulton, Smith \& Johnson}]{fulton_03}
Fulton, E., Smith, A. \& Johnson, C. (2003) Effect of complexity of marine
  ecosystem models.
\newblock \emph{Marine Ecology Progress Series}, \textbf{253}, 1--16.

\bibitem[{Gaichas(2008)}]{Gaichas_08}
Gaichas, S.K. (2008) A context for ecosystem-based fishery management:
  Developing concepts of ecosystems and sustainability.
\newblock \emph{Marine Policy}, \textbf{32}, 393 -- 401.

\bibitem[{Gelman \emph{et~al.}(2013)Gelman, Carlin, Stern, Dunson, Vehtari \&
  Rubin}]{gelman}
Gelman, A., Carlin, J., Stern, H., Dunson, D., Vehtari, A. \& Rubin, D. (2013)
  \emph{Bayesian {D}ata {A}nalysis}.
\newblock Chapman and Hall/CRC, third edition edition.

\bibitem[{Giron-Nava \emph{et~al.}(2019)Giron-Nava, Johnson,
  Cisneros-Montemayor \& Aburto-Oropeza}]{Giron_2018}
Giron-Nava, A., Johnson, A.F., Cisneros-Montemayor, A.M. \& Aburto-Oropeza, O.
  (2019) Managing at maximum sustainable yield does not ensure economic
  well-being for artisanal fishers.
\newblock \emph{Fish and Fisheries}, \textbf{20}, 214--223.

\bibitem[{Guillen \emph{et~al.}(2013)Guillen, Macher, Merz\'{e}r\'{e}aud,
  Bertignac, Fifas \& Guyader}]{Guillen_13}
Guillen, J., Macher, C., Merz\'{e}r\'{e}aud, M., Bertignac, M., Fifas, S. \&
  Guyader, O. (2013) Estimating {MSY} and {MEY} in multi-species and
  multi-fleet fisheries, consequences and limits: an application to the bay of
  biscay mixed fishery.
\newblock \emph{Marine Policy}, \textbf{40}, 64 -- 74.

\bibitem[{Hart \& Fay(2020)}]{Hart_fay}
Hart, A.R. \& Fay, G. (2020) Applying tree analysis to assess combinations of
  ecosystem-based fisheries management actions in management strategy
  evaluation.
\newblock \emph{Fisheries Research}, \textbf{225}, 105466.

\bibitem[{Harwood \& Stokes(2003)}]{harwood_stokes}
Harwood, J. \& Stokes, K. (2003) Coping with uncertainty in ecological advice:
  lessons from fisheries.
\newblock \emph{Trends in Ecology \& Evolution}, \textbf{18}, 617 -- 622.

\bibitem[{Hilborn(2007)}]{Hilborn_07}
Hilborn, R. (2007) Defining success in fisheries and conflicts in objectives.
\newblock \emph{Marine Policy}, \textbf{31}, 153 -- 158.

\bibitem[{Hilborn \& Walters(1992)}]{Hilborn1992}
Hilborn, R. \& Walters, C.J. (1992) \emph{{Q}uantitative {F}isheries {S}tock
  {A}ssessment: {C}hoice, {D}ynamics and {U}ncertainty}.
\newblock Springer Science.

\bibitem[{Hoffman \& Gelman(2011)}]{hoffman_gelman}
Hoffman, M. \& Gelman, A. (2011) The no-u-turn sampler: Adaptively setting path
  lengths in hamiltonian monte carlo.
\newblock \emph{Journal of Machine Learning Research}, \textbf{15}.

\bibitem[{Hollowed \emph{et~al.}(2000)Hollowed, Bax, Beamish, Collie, Fogarty,
  Livingston, Pope \& Rice}]{hollowed_etal}
Hollowed, A.B., Bax, N., Beamish, R., Collie, J., Fogarty, M., Livingston, P.,
  Pope, J. \& Rice, J.C. (2000) {Are multispecies models an improvement on
  single-species models for measuring fishing impacts on marine ecosystems?}
\newblock \emph{ICES Journal of Marine Science}, \textbf{57}, 707--719.

\bibitem[{Hyder \emph{et~al.}(2015)Hyder, Rossberg, Allen, Austen, Barciela,
  Bannister, Blackwell, Blanchard, Burrows, Defriez, Dorrington, Edwards,
  Garcia-Carreras, Heath, Hembury, Heymans, Holt, Houle, Jennings, Mackinson,
  Malcolm, McPike, Mee, Mills, Montgomery, Pearson, Pinnegar, Pollicino,
  Popova, Rae, Rogers, Speirs, Spence, Thorpe, Turner, van~der Molen, Yool \&
  Paterson}]{hyder}
Hyder, K., Rossberg, A.G., Allen, J.I., Austen, M.C., Barciela, R.M.,
  Bannister, H.J., Blackwell, P.G., Blanchard, J.L., Burrows, M.T., Defriez,
  E., Dorrington, T., Edwards, K.P., Garcia-Carreras, B., Heath, M.R., Hembury,
  D.J., Heymans, J.J., Holt, J., Houle, J.E., Jennings, S., Mackinson, S.,
  Malcolm, S.J., McPike, R., Mee, L., Mills, D.K., Montgomery, C., Pearson, D.,
  Pinnegar, J.K., Pollicino, M., Popova, E.E., Rae, L., Rogers, S.I., Speirs,
  D., Spence, M.A., Thorpe, R., Turner, R.K., van~der Molen, J., Yool, A. \&
  Paterson, D.M. (2015) Making modelling count - increasing the contribution of
  shelf-seas community and ecosystem models to policy development and
  management.
\newblock \emph{Marine Policy}, \textbf{61}, 291--302.

\bibitem[{ICES(2017)}]{ices_landings}
ICES (2017) {Official Nominal Catches}.
\newblock
  {h}ttp://ices.dk/marine-data/dataset-collections/Pages/Fish-catch-and-stock-assessment.aspx.

\bibitem[{ICES(2018{\natexlab{a}})}]{ices_hawg}
ICES (2018{\natexlab{a}}) {Herring Assessment Working Group for the Area South
  of 62 N (HAWG)}.
\newblock Technical report, ICES Scientific Reports. ACOM:07. 960 pp, ICES,
  Copenhagen.

\bibitem[{ICES(2018{\natexlab{b}})}]{ices_advice}
ICES (2018{\natexlab{b}}) {ICES Advice basis}.
\newblock Technical report, International Council for Exploration of the Seas.

\bibitem[{ICES(2018{\natexlab{c}})}]{ices_ns}
ICES (2018{\natexlab{c}}) {Report of the Working Group on the Assessment of
  Demersal Stocks in the North Sea and Skagerrak}.
\newblock Technical report, ICES Scientific Reports. ACOM:22. pp, ICES,
  Copenhagen.

\bibitem[{Kempf \emph{et~al.}(2010)Kempf, Dings\o{}r, Huse, Vinther, Floeter \&
  Temming}]{Kempf_10}
Kempf, A., Dings\o{}r, G.E., Huse, G., Vinther, M., Floeter, J. \& Temming, A.
  (2010) {The importance of predator-prey overlap: predicting North Sea cod
  recovery with a multispecies assessment model}.
\newblock \emph{ICES Journal of Marine Science}, \textbf{67}, 1989--1997.

\bibitem[{Kennedy \emph{et~al.}(2006)Kennedy, Anderson, Conti \&
  O'Hagan}]{KENNEDY20061301}
Kennedy, M.C., Anderson, C.W., Conti, S. \& O'Hagan, A. (2006) Case studies in
  {G}aussian process modelling of computer codes.
\newblock \emph{Reliability Engineering \& System Safety}, \textbf{91}, 1301 --
  1309.
\newblock The Fourth International Conference on Sensitivity Analysis of Model
  Output (SAMO 2004).

\bibitem[{Kennedy \& O'Hagan(2001)}]{kennedy_ohagan}
Kennedy, M.C. \& O'Hagan, A. (2001) Bayesian calibration of computer models.
\newblock \emph{Journal of the Royal Statistical Society: Series B (Statistical
  Methodology)}, \textbf{63}, 425--464.

\bibitem[{Larkin(1977)}]{Larkin_77}
Larkin, P. (1977) An epitaph for the concept of maximum sustained yield.
\newblock \emph{Transactions of the American Fisheries Society}, \textbf{106},
  1--11.

\bibitem[{Lewy \& Vinther(2004)}]{sms}
Lewy, P. \& Vinther, M. (2004) A stochastic age-length-structured multispecies
  model applied to north sea stocks.
\newblock Technical report, ICES.

\bibitem[{Link \emph{et~al.}(2011)Link, Bundy, Overholtz, Shackell, Manderson,
  Duplisea, Hare, Koen-Alonso \& Friedland}]{link_2011}
Link, J.S., Bundy, A., Overholtz, W.J., Shackell, N., Manderson, J., Duplisea,
  D., Hare, J., Koen-Alonso, M. \& Friedland, K.D. (2011) Ecosystem-based
  fisheries management in the {N}orthwest {A}tlantic.
\newblock \emph{Fish and Fisheries}, \textbf{12}, 152--170.

\bibitem[{Mackinson \emph{et~al.}(2009)Mackinson, Deas, Beveridge \&
  Casey}]{Mackinson_09}
Mackinson, S., Deas, B., Beveridge, D. \& Casey, J. (2009) Mixed-fishery or
  ecosystem conundrum? multispecies considerations inform thinking on long-term
  management of north sea demersal stocks.
\newblock \emph{Canadian Journal of Fisheries and Aquatic Sciences},
  \textbf{66}, 1107--1129.

\bibitem[{Mackinson \emph{et~al.}(2018)Mackinson, Platts, Garcia \&
  Lynam}]{mackinson}
Mackinson, S., Platts, M., Garcia, C. \& Lynam, C. (2018) Evaluating the
  fishery and ecological consequences of the proposed {N}orth {S}ea
  multi-annual plan.
\newblock \emph{PLOS ONE}, \textbf{13}, 1--23.

\bibitem[{McKay \emph{et~al.}(1979)McKay, Beckman \& Conover}]{McKay_79}
McKay, M.D., Beckman, R.J. \& Conover, W.J. (1979) A comparison of three
  methods for selecting values of input variables in the analysis of output
  from a computer code.
\newblock \emph{Technometrics}, \textbf{21}, 239--245.

\bibitem[{Mesnil(2012)}]{Mensil_12}
Mesnil, B. (2012) The hesitant emergence of maximum sustainable yield ({MSY})
  in fisheries policies in europe.
\newblock \emph{Marine Policy}, \textbf{36}, 473 -- 480.

\bibitem[{Nash(1951)}]{nash}
Nash, J. (1951) Non-cooperative games.
\newblock \emph{Annals of Mathematics}, \textbf{54}, 286--295.

\bibitem[{Nielsen \emph{et~al.}(2018)Nielsen, Thunberg, Holland, Schmidt,
  Fulton, Bastardie, Punt, Allen, Bartelings, Bertignac, Bethke, Bossier,
  Buckworth, Carpenter, Christensen, Christensen, Da-Rocha, Deng, Dichmont,
  Doering, Esteban, Fernandes, Frost, Garcia, Gasche, Gascuel, Gourguet,
  Groeneveld, Guill\'{e}n, Guyader, Hamon, Hoff, Horbowy, Hutton, Lehuta,
  Little, Lleonart, Macher, Mackinson, Mahevas, Marchal, Mato-Amboage,
  Mapstone, Maynou, Merz\'{e}r\'{e}aud, Palacz, Pascoe, Paulrud, Plaganyi,
  Prellezo, van Putten, Quaas, Ravn-Jonsen, Sanchez, Simons, Th\'{e}baud,
  Tomczak, Ulrich, van Dijk, Vermard, Voss \& Waldo}]{Nielsen_et_al_18}
Nielsen, J.R., Thunberg, E., Holland, D.S., Schmidt, J.O., Fulton, E.A.,
  Bastardie, F., Punt, A.E., Allen, I., Bartelings, H., Bertignac, M., Bethke,
  E., Bossier, S., Buckworth, R., Carpenter, G., Christensen, A., Christensen,
  V., Da-Rocha, J.M., Deng, R., Dichmont, C., Doering, R., Esteban, A.,
  Fernandes, J.A., Frost, H., Garcia, D., Gasche, L., Gascuel, D., Gourguet,
  S., Groeneveld, R.A., Guill\'{e}n, J., Guyader, O., Hamon, K.G., Hoff, A.,
  Horbowy, J., Hutton, T., Lehuta, S., Little, L.R., Lleonart, J., Macher, C.,
  Mackinson, S., Mahevas, S., Marchal, P., Mato-Amboage, R., Mapstone, B.,
  Maynou, F., Merz\'{e}r\'{e}aud, M., Palacz, A., Pascoe, S., Paulrud, A.,
  Plaganyi, E., Prellezo, R., van Putten, E.I., Quaas, M., Ravn-Jonsen, L.,
  Sanchez, S., Simons, S., Th\'{e}baud, O., Tomczak, M.T., Ulrich, C., van
  Dijk, D., Vermard, Y., Voss, R. \& Waldo, S. (2018) Integrated
  ecological-economic fisheries models-evaluation, review and challenges for
  implementation.
\newblock \emph{Fish and Fisheries}, \textbf{19}, 1--29.

\bibitem[{No\`{e} \emph{et~al.}(2019)No\`{e}, Lazarus, Gao, Davies, Macdonald,
  Mangion, Berry, Luo \& Husmeier}]{noe_19}
No\`{e}, U., Lazarus, A., Gao, H., Davies, V., Macdonald, B., Mangion, K.,
  Berry, C., Luo, X. \& Husmeier, D. (2019) Gaussian process emulation to
  accelerate parameter estimation in a mechanical model of the left ventricle:
  a critical step towards clinical end-user relevance.
\newblock \emph{Journal of The Royal Society Interface}, \textbf{16}, 20190114.

\bibitem[{Norrstr{\"o}m \emph{et~al.}(2017)Norrstr{\"o}m, Casini \&
  Holmgren}]{norrstrom2017nash}
Norrstr{\"o}m, N., Casini, M. \& Holmgren, N. (2017) Nash equilibrium can
  resolve conflicting maximum sustainable yields in multi-species fisheries
  management.
\newblock \emph{ICES Journal of Marine Science}, \textbf{74}, 78--90.

\bibitem[{Ok (2017)}]{ok_07}
Ok, E.A. (2017) \emph{Real Analysis with Economic Applications}.
\newblock Princeton University Press.

\bibitem[{Pascoe \emph{et~al.}(2018)Pascoe, Hutton \& Hoshino}]{Pascoe_18}
Pascoe, S., Hutton, T. \& Hoshino, E. (2018) Offsetting externalities in
  estimating {MEY} in multispecies fisheries.
\newblock \emph{Ecological Economics}, \textbf{146}, 304 -- 311.

\bibitem[{Pauly \& Froese(2014)}]{Pauly_Froese_14}
Pauly, D. \& Froese, R. (2014) \emph{Fisheries Management}.
\newblock American Cancer Society.

\bibitem[{Pikitch \emph{et~al.}(2004)Pikitch, Santora, Babcock, Bakun, Bonfil,
  Conover, Dayton, Doukakis, Fluharty, Heneman, Houde, Link, Livingston,
  Mangel, McAllister, Pope \& Sainsbury}]{Pikitch_04}
Pikitch, E.K., Santora, C., Babcock, E.A., Bakun, A., Bonfil, R., Conover,
  D.O., Dayton, P., Doukakis, P., Fluharty, D., Heneman, B., Houde, E.D., Link,
  J., Livingston, P.A., Mangel, M., McAllister, M.K., Pope, J. \& Sainsbury,
  K.J. (2004) Ecosystem-{B}ased {F}ishery {M}anagement.
\newblock \emph{Science}, \textbf{305}, 346--347.

\bibitem[{{R Core Team}(2020)}]{R}
{R Core Team} (2020) \emph{R: A Language and Environment for Statistical
  Computing}.
\newblock R Foundation for Statistical Computing, Vienna, Austria.

\bibitem[{Roustant \emph{et~al.}(2012)Roustant, Ginsbourger \&
  Deville}]{DiceKriging}
Roustant, O., Ginsbourger, D. \& Deville, Y. (2012) {DiceKriging}, {DiceOptim}:
  Two {R} packages for the analysis of computer experiments by kriging-based
  metamodeling and optimization.
\newblock \emph{Journal of Statistical Software}, \textbf{51}, 1--55.

\bibitem[{S\"{a}terberg \emph{et~al.}(2019)S\"{a}terberg, Casini \&
  Gardmark}]{saterberg_19}
S\"{a}terberg, T., Casini, M. \& Gardmark, A. (2019) Ecologically {S}ustainable
  {E}xploitation {R}ates-{A} multispecies approach for fisheries management.
\newblock \emph{Fish and Fisheries}, \textbf{20}, 952--961.

\bibitem[{Sissenwine \& Shepherd(1987)}]{sissenwine_shepherd_87}
Sissenwine, M.P. \& Shepherd, J.G. (1987) An alternative perspective on
  recruitment overfishing and biological reference points.
\newblock \emph{Canadian Journal of Fisheries and Aquatic Sciences},
  \textbf{44}, 913--918.

\bibitem[{Sobol'(1967)}]{sobol}
Sobol', I. (1967) On the distribution of points in a cube and the approximate
  evaluation of integrals.
\newblock \emph{USSR Computational Mathematics and Mathematical Physics},
  \textbf{7}, 86 -- 112.

\bibitem[{Speirs \emph{et~al.}(2016)Speirs, Greenstreet \& Heath}]{fishsums}
Speirs, D., Greenstreet, S. \& Heath, M. (2016) Modelling the effects of
  fishing on the {N}orth {S}ea fish community size composition.
\newblock \emph{Ecological Modelling}, \textbf{321}, 35--45.

\bibitem[{Spence \emph{et~al.}(2016)Spence, Blackwell \& Blanchard}]{spence_ns}
Spence, M.A., Blackwell, P.G. \& Blanchard, J.L. (2016) Parameter uncertainty
  of a dynamic multispecies size spectrum model.
\newblock \emph{Canadian Journal of Fisheries and Aquatic Sciences},
  \textbf{73}, 589--597.

\bibitem[{Spence \emph{et~al.}(2018)Spence, Blanchard, Rossberg, Heath,
  Heymans, Mackinson, Serpetti, Speirs, Thorpe \& Blackwell}]{spence_ff}
Spence, M.A., Blanchard, J.L., Rossberg, A.G., Heath, M.R., Heymans, J.J.,
  Mackinson, S., Serpetti, N., Speirs, D.C., Thorpe, R.B. \& Blackwell, P.G.
  (2018) A general framework for combining ecosystem models.
\newblock \emph{Fish and Fisheries}, \textbf{19}, 1031--1042.

\bibitem[{{Stan Development Team}(2020)}]{stan}
{Stan Development Team} (2020) {RStan}: the {R} interface to {Stan}.
\newblock R package version 2.19.3.

\bibitem[{Steele \emph{et~al.}(2011)Steele, Gifford \& Collie}]{Steel_11}
Steele, J.H., Gifford, D.J. \& Collie, J.S. (2011) Comparing species and
  ecosystem-based estimates of fisheries yields.
\newblock \emph{Fisheries Research}, \textbf{111}, 139 -- 144.

\bibitem[{Szuwalski \emph{et~al.}(2016)Szuwalski, Burgess, Costello \&
  Gaines}]{Szuwalski_16}
Szuwalski, C., Burgess, M., Costello, C. \& Gaines, S. (2016) High fishery
  catches through trophic cascades in {C}hina.
\newblock \emph{Proceedings of the National Academy of Sciences}, \textbf{114},
  201612722.

\bibitem[{Thorpe(2019)}]{Thorpe_19}
Thorpe, R.B. (2019) What is multispecies msy? a worked example from the north
  sea.
\newblock \emph{Journal of Fish Biology}, \textbf{94}, 1011--1018.

\bibitem[{Thorpe \emph{et~al.}(2017)Thorpe, Jennings \& Dolder}]{thorpe17}
Thorpe, R.B., Jennings, S. \& Dolder, P.J. (2017) Risks and benefits of
  catching pretty good yield in multispecies mixed fisheries.
\newblock \emph{ICES Journal of Marine Science}, \textbf{74}, 2097--2106.

\bibitem[{Thorpe \emph{et~al.}(2015)Thorpe, Le~Quesne, Luxford, Collie \&
  Jennings}]{thorpe15}
Thorpe, R.B., Le~Quesne, W.J.F., Luxford, F., Collie, J.S. \& Jennings, S.
  (2015) Evaluation and management implications of uncertainty in a
  multispecies size-structured model of population and community responses to
  fishing.
\newblock \emph{Methods in Ecology and Evolution}, \textbf{6}, 49--58.

\bibitem[{Tsikliras \& Froese(2018)}]{Tsikliras_Froese}
Tsikliras, A. \& Froese, R. (2018) \emph{Maximum Sustainable Yield}, pp.
  108--115.
\newblock Elsevier.

\bibitem[{Vernon \emph{et~al.}(2014)Vernon, Goldstein \& Bower}]{veron}
Vernon, I., Goldstein, M. \& Bower, R. (2014) Galaxy formation : {B}ayesian
  history matching for the observable universe.
\newblock \emph{Statistical science}, \textbf{29}, 81--90.

\bibitem[{Wood(2017)}]{wood_gam}
Wood, S.N. (2017) \emph{Generalized {A}dditive {M}odels: {A}n {I}ntroduction
  with {R}}.
\newblock Chapman and Hall/CRC, second edition edition.

\bibitem[{Worm \emph{et~al.}(2009)Worm, Hilborn, Baum, Branch, Collie,
  Costello, Fogarty, Fulton, Hutchings, Jennings, Jensen, Lotze, Mace,
  McClanahan, Minto, Palumbi, Parma, Ricard, Rosenberg, Watson \&
  Zeller}]{Worm_09}
Worm, B., Hilborn, R., Baum, J.K., Branch, T.A., Collie, J.S., Costello, C.,
  Fogarty, M.J., Fulton, E.A., Hutchings, J.A., Jennings, S., Jensen, O.P.,
  Lotze, H.K., Mace, P.M., McClanahan, T.R., Minto, C., Palumbi, S.R., Parma,
  A.M., Ricard, D., Rosenberg, A.A., Watson, R. \& Zeller, D. (2009) Rebuilding
  global fisheries.
\newblock \emph{Science}, \textbf{325}, 578--585.

\end{thebibliography}
\end{document}